\newcommand{\dd}{\textup{d}}
\def\eps{\varepsilon}
\def\E{\mathbb{E}}
\def\P{\mathbb{P}}
\def\R{\mathbb{R}}
\def\t{\textup{target}}
\def\D{\prescript{}{0}D_{t}^{1-\alpha}}
\newtheorem{theorem}{Theorem}
\newtheorem{lemma}{Lemma}
\newtheorem{proposition}{Proposition}
\theoremstyle{plain}
\theoremstyle{remark}
\newtheorem{remark}{Remark}
\begin{document}


\title{First passage times under\\ frequent stochastic resetting}


\author{Samantha Linn\thanks{Department of Mathematics, University of Utah, Salt Lake City, UT 84112 USA (\texttt{samantha.linn@utah.edu}). SL was supported by the National Science Foundation (Grant No. 2139322).} \and Sean D. Lawley\thanks{Department of Mathematics, University of Utah, Salt Lake City, UT 84112 USA (\texttt{lawley@math.utah.edu}). SDL was supported by the National Science Foundation (Grant Nos.\ CAREER DMS-1944574 and DMS-1814832).}
}
\date{\today}
\maketitle

\begin{abstract}
We determine the full distribution and moments of the first passage time for a wide class of stochastic search processes in the limit of frequent stochastic resetting. Our results apply to any system whose short-time behavior of the search process without resetting can be estimated. In addition to the typical case of exponentially distributed resetting times, we prove our results for a wide array of resetting time distributions. We illustrate our results in several examples and show that the errors of our approximations vanish exponentially fast in typical scenarios of diffusive search.
\end{abstract}

\section{Introduction}
For many decades, there has been sustained interest in understanding first passage times (FPTs), which characterize the time it takes a searcher to find a target \cite{Siegert1951, Slepian1961, Szabo1980, redner2001, benichou2011, kurella2015}. Search processes of interest include pure diffusion \cite{Roberts1986, lindsay2017}, anomalous diffusion \cite{gitterman2000, palyulin2019}, random walks on discrete networks \cite{noh2004, masuda2017}, run and tumble particles \cite{evans2018, santra2020}, inactivating searchers \cite{grebenkov2017, ma2020}, and so on.

More recently, there has been a strong interest in FPTs of searchers under stochastic resetting, which means that the searcher is reset to elsewhere in the state space at random times \cite{evans2020, aPal_2015_PRE, masoliver_2019_PRE, Bressloff_2020_jphysA2}. Biological approximations of such behavior vary widely in spatial and temporal scale, for example, RNA cleavage during transcription \cite{Roldan_16} and predator dynamics during foraging \cite{Mercado_2018}. In theoretical treatments of these systems, it has been shown that stochastic resetting can reduce the expected search time \cite{Schumm_2021}. For instance, consider a diffusing searcher on the half line with an absorbing boundary condition at the origin. It is well-known that the mean FPT to the origin is infinite. However, if the searcher undergoes resetting according to an exponential random variable with rate $r\in(0,\infty)$, then the mean FPT is finite \cite{evans2020}. This result can be heuristically explained by the fact that the searcher resets its position before wandering too far from the origin (see Figure \ref{figschem}a for an illustration).

\begin{figure}
  \centering
             {\scriptsize(a)}\hspace{-.05\textwidth}
             \includegraphics[width=0.495\textwidth]{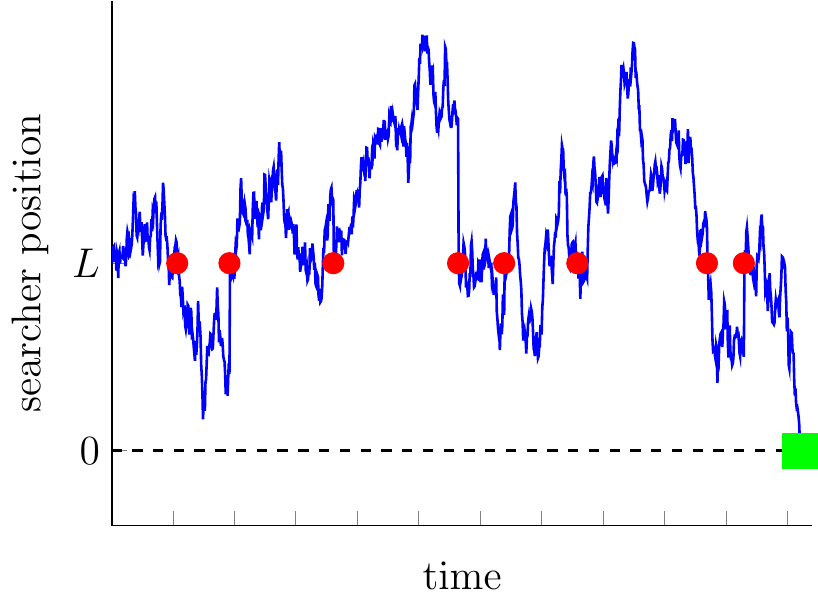}
             {\scriptsize(b)}\hspace{-.05\textwidth}
             \includegraphics[width=0.495\textwidth]{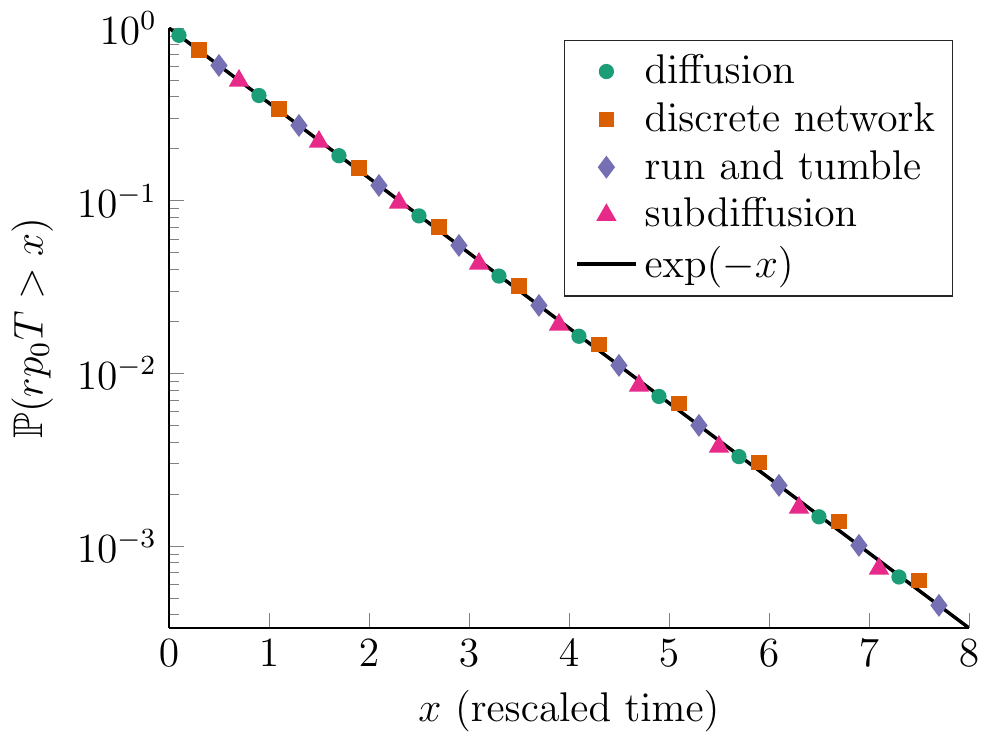}            
    \caption{{Panel (a)}: Diffusive search under exponential stochastic resetting with rate $r>0$ to $x=L>0$ in one spatial dimension. Red dots indicate resetting times and the green square indicates the FPT to $x=0$. {Panel (b)}: FPTs of disparate search processes behave similarly under frequent exponential stochastic resetting. The numerical details of the examples plotted here are described in section~\ref{examples}.}
 \label{figschem}
\end{figure}

Here, we study the full distribution and moments of FPTs for a wide class of stochastic search processes in the limit of frequent stochastic resetting. To briefly summarize our results, let $T$ denote the FPT of a stochastic searcher that resets to its initial position (or distribution of initial positions) at random times. We are most interested in the case that the resetting occurs at exponentially distributed times with rate $r>0$ (i.e.\ Poissonian resetting), but we prove our results for much more general resetting time distributions. If $p=p(r)\in(0,1)$ denotes the probability that the searcher finds a target before resetting, then under very mild assumptions on the search process we prove that $rp_{0}T$ converges in distribution to an exponential random variable with unit rate, which we denote by
\begin{align} \label{eq:1}
    rp_{0} T \to_{\textup{dist}} \textup{Exponential}(1) \quad\text{as }r\to\infty,
\end{align}
where $p_{0}=p_{0}(r)$ is any function of $r$ satisfying $p_{0}\sim p$ as $r\to\infty$. Throughout this work, $f\sim g$ denotes $f/g\to 1$. Roughly speaking, \eqref{eq:1} says that $T$ is approximately exponentially distributed with rate $rp_{0}$,
\begin{align*}
    \mathbb{P}(T>t) \approx e^{-rp_{0}t},\quad \text{if $t\ge 0$ and $p_{0}\ll1$}.
\end{align*}
In addition to the full distribution in \eqref{eq:1}, we determine the behavior of all the moments of $T$,
\begin{align} \label{eq:2}
    \mathbb{E}[T^m] \sim \frac{m!}{(rp_{0})^m} \quad\text{as }r\to\infty,
\end{align}
for integers $m\geq 1$. To make these results readily applicable, we determine appropriate choices of $p_{0}$ based on the short-time distribution of the search process without resetting.

The results in (\ref{eq:1}) and (\ref{eq:2}) show that many stochastic search processes with resetting behave similarly once we scale the search time by the resetting rate and the probability of a successful search. This is illustrated in Figure~\ref{figschem}b, which displays results from numerical solutions of quite disparate search processes that are nevertheless all approximately exponentially distributed with rate $rp_{0}$ (the details of these and other examples are given in section~\ref{examples}). Further, in typical scenarios of interest for diffusive search, we find that the asymptotic estimates converge exponentially fast. We note that we do not require the moments of the search time without resetting to be finite. Moreover, establishing these results requires knowledge only of the short-time behavior of the search process without resetting. With this information, we determine the asymptotic behavior of $p$ (i.e.\ we determine $p_{0}$) and thus the limiting distribution of the FPT.

The rest of the paper is organized as follows. In section \ref{results}, we prove \eqref{eq:1} and \eqref{eq:2}, and we determine the asymptotic behavior of $p$ under various assumptions on the short-time behavior of the search process without resetting. In section \ref{examples}, we apply these results to several scenarios including diffusive and subdiffusive search in one or three spatial dimensions, a random walk on a discrete network, and a run and tumble particle. We consider these examples in sections~\ref{diff}-\ref{sub} for the case of exponential resetting, and we consider diffusive search with sharp, uniform, and gamma resetting distributions in section~\ref{nonexp}. We conclude with a brief discussion. The Appendix contains proofs and technical points.

\section{Probabilistic setup and main results} \label{results}
In this section, we present results on the FPT of a searcher under frequent stochastic resetting. While these results make no explicit reference to the underlying search process (e.g.\ diffusion or otherwise), we later apply them to a diffusive search process and other stochastic processes.

\subsection{FPTs under frequent stochastic resetting}\label{mainsection}

Let $\tau$ denote a random variable whose cumulative distribution function, $F_{\tau}(t) \coloneqq  \mathbb{P}(\tau\leq t)$, satisfies
\begin{align} 
    F_{\tau}(0) &= 0,\label{Ftau0}\\ \quad F_{\tau}(t) &> 0 \quad\text{for some }t\in(0,\infty).\label{Ftau}
\end{align}
In words, \eqref{Ftau0} says that $\tau$ is strictly positive and \eqref{Ftau} merely excludes the trivial case that $\tau$ is always infinite. 

Let resetting occur according to a random variable, $\sigma>0$, with mean $\E[\sigma]=1/r$, where we refer to $r>0$ as the resetting rate. To construct $\sigma$ precisely, let $Y>0$ be any strictly positive random variable that does not depend on $r$ and has unit mean,
\begin{align}\label{Yunitmean}
\E[Y]=1,
\end{align}
and finite moment generating function in a neighborhood of the origin. That is, assume that there exists a $\delta>0$ so that
\begin{align}\label{ymgf}
\E[e^{zY}]<\infty\quad\text{for all }z\in[-\delta,\delta].
\end{align}
We then define $\sigma$ as
\begin{align}\label{sigmadef}
\sigma
:=Y/r.
\end{align}
The probability of the search process ending (i.e.\ a ``successful" search) prior to resetting is
\begin{align} \label{eq:p}
    p=p(r) \coloneqq  \mathbb{P}(\tau < \sigma) = \int_0^{\infty} S_{\sigma}(t) \,\text{d}F_{\tau}(t),
\end{align}
where $S_{\sigma}(t)=\P(\sigma>t)$ is the survival probability of $\sigma$. To exclude trivial cases, we assume
\begin{align}\label{nontrivial}
p>0\quad\text{for all }r>0.
\end{align}

Most prior studies of stochastic resetting consider exponential (i.e.\ Poissonian) resetting, which in this framework means that $Y$ is exponential with unit mean, and thus has survival probability
\begin{align}\label{expY}
S_{Y}(y)
:=\P(Y>y)
=e^{-y}\quad\text{for }y\ge0.
\end{align}
For such exponential resetting, $S_{\sigma}(t)=e^{-rt}$, $p$ in \eqref{eq:p} is the Laplace-Stieltjes transform of $F_{\tau}(t)$, and \eqref{Ftau} ensures \eqref{nontrivial} is satisfied.
However, the framework in \eqref{Yunitmean}-\eqref{sigmadef} includes much more general resetting distributions. For example, ``sharp reset'' \cite{pal2017} in which resetting occurs at a deterministic time $\sigma=1/r$ fits into this framework by setting
\begin{align*}
S_{Y}(y)
=\begin{cases}
1 & \text{if }y<1,\\
0 & \text{if }y\ge1.
\end{cases}
\end{align*}
Many other choices of the resetting time (such as uniform reset and gamma distributed reset considered in \cite{pal2017}) also fit into this framework (see section~\ref{nonexp} for examples).

Let $R\in\{0,1,\dots\}$ denote the number of resets before the searcher finds the target, or the number of ``unsuccessful" searches. From \eqref{eq:p}, we infer that $R$ is a geometric random variable with probability of success $p\in(0,1)$. That is,
\begin{align} \label{defR}
    \mathbb{P}(R=n) = (1-p)^np, \quad\text{for } n\in\{0,1,\dots\}.
\end{align}
The random variable that describes the total search time with resetting, $T>0$, is thus given by
\begin{align} \label{eq:T}
    T \coloneqq  \sum_{n=1}^R \sigma_n^- + \tau^-,
\end{align}
where $\{\sigma^-_n\}_{n\geq 1}$ is an independent and identically distributed sequence of random variables with common survival probability,
\begin{align} \label{eq:Ssigm}
    S_{\sigma^-}(t) \coloneqq \mathbb{P}(\sigma^- > t)  = \mathbb{P}(\sigma > t | \sigma < \tau).
\end{align}
Further, $\tau^-$ is a random variable defined by survival probability,
\begin{align*}
    S_{\tau^-}(t) \coloneqq \mathbb{P}(\tau^- > t)  = \mathbb{P}(\tau > t | \tau < \sigma).
\end{align*}
In words, the definition of $T$ in \eqref{eq:T} is the sum of the unsuccessful search times plus the single successful search time. We emphasize that $R$, $\{\sigma_{n}^-\}_{n\ge1}$, and $\tau^-$ are independent.

Before presenting the main results, we sketch the essential idea. In the frequent resetting limit ($r\to\infty$), the iteration of the search process that finally reaches the target before resetting, denoted by $\tau^-$ in (\ref{eq:T}), must be fast. Indeed, we show that this term is negligible compared to the other terms in \eqref{eq:T} for large $r$. Further, the condition imposed on the survival probability in \eqref{eq:Ssigm} becomes inconsequential for frequent resetting, and thus $\sigma^{-}\approx\sigma$. Therefore, \eqref{eq:T} reduces to a geometric sum of independent random variables, which becomes exponentially distributed by the so-called ``weak law of small numbers'' \cite{durrett2019}.

\begin{theorem} \label{main}
Under the assumptions of section~\ref{mainsection}, let $p_{0}=p_{0}(r)$ be any function of $r$ satisfying
\begin{align*}
p_{0}\sim p\quad\text{as }r\to\infty.
\end{align*}
Then $rp_{0}T$ converges in distribution to an exponential random variable with unit rate, 
\begin{align} \label{main1}
    rp_{0} T \to_{\textup{dist}} \textup{Exponential}(1)\quad \text{as }r\to\infty.
\end{align}
Further, for integers $m\geq 1$,
\begin{align} \label{main2}
    \mathbb{E}[T^m] \sim \frac{m!}{(rp_{0})^m} \quad\text{as }r\to\infty.
\end{align}
\end{theorem}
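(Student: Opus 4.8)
The plan is to pass to Laplace transforms and invoke a continuity theorem. First I would use the independence of $R$, $\{\sigma_n^-\}_{n\ge1}$, and $\tau^-$ together with the geometric law \eqref{defR} to sum the resulting geometric series, obtaining
\begin{align*}
\E[e^{-sT}]=\E[e^{-s\tau^-}]\,\frac{p}{1-(1-p)\,\E[e^{-s\sigma^-}]},\qquad s\ge0.
\end{align*}
The decisive step is to undo the conditioning in \eqref{eq:Ssigm}: since $p\,\E[e^{-s\tau^-}]=\E[e^{-s\tau}\mathbf 1_{\{\tau<\sigma\}}]=:A(s)$ and $(1-p)\,\E[e^{-s\sigma^-}]=\E[e^{-s\sigma}\mathbf 1_{\{\sigma<\tau\}}]=:B(s)$, the transform collapses to the conditioning-free expression
\begin{align*}
\E[e^{-sT}]=\frac{A(s)}{1-B(s)}.
\end{align*}
This single identity packages both the negligibility of $\tau^-$ and the approximation $\sigma^-\approx\sigma$ into two unconditioned expectations.

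Next I would substitute $s\mapsto rp_0 s$ and compute the limit using three facts: $p\to0$ (since $\sigma=Y/r\to0$ while $\tau>0$ a.s.\ by \eqref{Ftau0}); $p_0\sim p$; and the recurring estimate $\E[Y\mathbf 1_{\{\tau\le Y/r\}}]\to0$, which is immediate from dominated convergence since $\tau>0$ a.s.\ and $\E[Y]=1<\infty$. For the numerator, on $\{\tau<\sigma\}$ one has $rp_0 s\tau<p_0 sY$, so $0\le p-A(rp_0 s)\le p_0 s\,\E[Y\mathbf 1_{\{\tau<\sigma\}}]$, which is $o(p)$; hence $A(rp_0 s)\sim p$. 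For the denominator I would split
\begin{align*}
1-B(rp_0 s)=\bigl(1-\E[e^{-rp_0 s\sigma}]\bigr)+\E\!\left[e^{-rp_0 s\sigma}\mathbf 1_{\{\sigma\ge\tau\}}\right].
\end{align*}
Since $\sigma=Y/r$, the first term is $1-\E[e^{-p_0 sY}]=p_0 s+o(p_0)\sim ps$ by Taylor expanding the moment generating function of $Y$, which is licensed by \eqref{Yunitmean} and \eqref{ymgf}; the second term equals $\P(\sigma\ge\tau)-\E[(1-e^{-p_0 sY})\mathbf 1_{\{\sigma\ge\tau\}}]$, where $\P(\sigma\ge\tau)=1-\P(\sigma<\tau)=p$ and, using $1-e^{-x}\le x$, the subtracted piece is at most $p_0 s\,\E[Y\mathbf 1_{\{\sigma\ge\tau\}}]=o(p)$, so this term is $\sim p$. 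Therefore $1-B(rp_0 s)\sim p(1+s)$ and
\begin{align*}
\E[e^{-rp_0 s T}]=\frac{A(rp_0 s)}{1-B(rp_0 s)}\longrightarrow\frac{1}{1+s}\qquad\text{as }r\to\infty,
\end{align*}
the Laplace transform of a unit-rate exponential; the continuity theorem for Laplace transforms then yields \eqref{main1}.

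For \eqref{main2} I would upgrade \eqref{main1} to convergence of moments by proving uniform integrability of $\{(rp_0 T)^m\}_r$, for which it suffices to bound $\E[(rp_0 T)^{m+1}]$ uniformly in $r$. Writing $T$ as the geometric compound $\sum_{n=1}^R\sigma_n^-+\tau^-$, the scaled moments of $\tau^-$ vanish by the same estimate $\E[Y\mathbf 1_{\{\tau\le Y/r\}}]\to0$, while the $m$-th moment of the compound sum expands, over set partitions of $m$ indices, into terms $\E[R^{(k)}]\prod_{B}\E[(\sigma^-)^{|B|}]$, where the product runs over the $k$ blocks $B$ of the partition. Using $\E[R^{(k)}]=k!\,(1-p)^k/p^k$ and $\E[(\sigma^-)^j]\le\E[Y^j]/(r^j(1-p))$, the all-singletons partition contributes the dominant term $\E[R^{(m)}]\,\E[\sigma^-]^m\sim m!/(rp)^m$, whereas every coarser partition (with $k<m$ blocks) is smaller by the factor $p^{\,m-k}\to0$. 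Multiplying by $(rp_0)^m$ and using $p_0\sim p$ gives $\E[(rp_0 T)^m]\to m!=\E[\mathrm{Exponential}(1)^m]$, which is \eqref{main2}.

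I expect this last step to be the main obstacle. The delicacy is controlling the three interacting parameters at once—$R$ is large ($\sim p^{-1}$), each $\sigma^-$ is small ($\sim r^{-1}$), and the conditioning defect must be shown to be of lower order—uniformly in $r$ and simultaneously across all partition terms, so that the uniform integrability (equivalently, the direct asymptotic expansion) actually closes. The finite moment generating function \eqref{ymgf} is exactly what makes the bounds on $\E[Y^j]$ available for every $j$. If the combinatorial bookkeeping proves unwieldy, a cleaner alternative is to show that $s\mapsto\E[e^{-rp_0 s T}]$ extends analytically and is uniformly bounded on a fixed complex neighborhood of the origin, so that Vitali's theorem promotes the pointwise convergence of Laplace transforms to convergence of all derivatives at $0$, delivering \eqref{main2} directly.
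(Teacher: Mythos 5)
Your proposal is correct, and although it shares the paper's skeleton---transform the geometric compound sum, apply a continuity theorem for \eqref{main1}, then treat moments separately for \eqref{main2}---its implementation is genuinely different and in places cleaner. The paper carries the conditioned variables $\sigma^-$ and $\tau^-$ through the entire argument and controls them with five supporting lemmas (Lemmas~\ref{lemmap0}--\ref{lem7}); in particular, its Lemma~\ref{lem7} needs the exponential tail bound $\P(Y>y)\le \E[e^{\delta Y}]e^{-\delta y}$ furnished by \eqref{ymgf}, and its Lemma~\ref{lem6} needs a term-by-term expansion of the moment generating function justified by dominated convergence. Your unconditioning identity $\E[e^{-sT}]=A(s)/(1-B(s))$ collapses all of this into two elementary facts: $1-e^{-x}\le x$ and the dominated-convergence estimate $\E[Y\mathbbm{1}_{\{\tau\le Y/r\}}]\to 0$. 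A notable byproduct is that your proof of \eqref{main1} uses only $\E[Y]=1$: the Taylor step you attribute to \eqref{ymgf} needs just a finite first moment, so your route proves the distributional limit under weaker hypotheses, with \eqref{ymgf} entering only through the moments of $Y$ in \eqref{main2}. Working with Laplace transforms on $s\ge 0$ rather than a two-sided MGF also makes the geometric series converge trivially, and folding $p_0$ into the substitution $s\mapsto rp_0 s$ removes the need for the paper's Lemma~\ref{lemmap0}. For the moments, your set-partition expansion is, if anything, more complete than the paper's written proof: the paper computes $\lim_{r\to\infty}\E[(prT)^2]=2$ and deduces uniform integrability of $\{prT\}$, which as written fully justifies only the first two moments, whereas your factorial-moment bookkeeping ($\E[R^{(k)}]=k!(1-p)^k/p^k$, $\E[(\sigma^-)^j]\le \E[Y^j]/(r^j(1-p))$, so each $k$-block term scaled by $(rp)^m$ is $O(p^{m-k})$) yields $(rp)^m\E[T^m]\to m!$ directly for every $m$; the closing difficulty you anticipate does not materialize, and the Vitali fallback is unnecessary. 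Two details to make explicit in a write-up: the dominant all-singletons term needs the matching lower bound $\E[\sigma^-]\ge r^{-1}\E[Y\mathbbm{1}_{\{Y/r<\tau\}}]=r^{-1}(1-o(1))$, obtained from the same recurring estimate; and your identities $(1-p)\E[e^{-s\sigma^-}]=B(s)$ and $\P(\sigma\ge\tau)=p$ silently assume $\P(\sigma=\tau)=0$, a no-ties assumption the paper itself makes implicitly (e.g.\ in the proof of its Lemma~\ref{momentsequiv}), so this is an ambiguity of the setup rather than a gap in your argument.
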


Recall that \eqref{main1} means 
\begin{align*}
    \lim_{r\to\infty} \mathbb{P}(rp_{0}T> x)
    = e^{-x},\quad\text{for all }x\ge0.
\end{align*}
We prove \eqref{main1} by proving convergence of moment generating functions and invoking L\'evy's continuity theorem. Now, convergence in distribution does not imply moment convergence. That is, \eqref{main1} does not imply \eqref{main2}. We prove \eqref{main2} by showing that $\{prT\}_{r>0}$ is uniformly integrable for sufficiently large $r$. The proof of Theorem~\ref{main} and the other results in this section are given in the Appendix.

\begin{remark}\label{remarkexact}
For the case of the first moment (i.e.\ $m=1$) for exponential resetting, one can verify that
\begin{align}\label{exact}
\E[T]
=\frac{1}{rp}(1-p)\quad\text{for any }r>0,
\end{align}
and thus the asymptotic result in \eqref{main2} follows from the exact representation in \eqref{exact} and the intuitive result that $\lim_{r\to\infty}p=0$ (see Lemma~\ref{pvanish} in the Appendix).
\end{remark}

\subsection{Asymptotics of the probability $p$ of a successful search for exponential resetting}\label{charsection}

Applying Theorem~\ref{main} to a given system requires knowledge of $p$. In this section, we assume exponential resetting (i.e.\ $S_{\sigma}(t)=e^{-rt}$) and consider various assumptions on the cumulative distribution function $F_{\tau}$ and determine the resulting asymptotics of the probability $p$ of a successful search, which then yields the distribution and moments of the FPT $T$ via Theorem~\ref{main}.

\subsubsection{Diffusion}

We first consider the typical short-time behavior of $F_{\tau}$ for a diffusive search when the searcher cannot start arbitrarily close to the target. In this case, $F_{\tau}(t)$ decays exponentially as $t\to0^{+}$. We begin with a result when we merely know the short-time behavior of $F_{\tau}$ on a logarithmic scale.

\begin{theorem} \label{thmlognormal}
For exponential resetting as in \eqref{expY}, if
\begin{align}\label{lim_thmlognormal}
    \lim_{t\to 0^+} t \ln F_{\tau}(t) = -C < 0,
\end{align}
then ${{p}}$ in \eqref{eq:p} and $T$ in \eqref{eq:T} satisfy
\begin{align*} 
\ln p
&\sim -\sqrt{4Cr}\quad\text{as }r\to\infty,\\
\ln (r^{m}\E[T^{m}])
&\sim m\sqrt{4Cr}\quad\text{as }r\to\infty.
\end{align*}
\end{theorem}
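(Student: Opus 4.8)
The plan is to first convert $p$ into an ordinary Laplace integral and then extract its logarithmic asymptotics by Laplace's method. Since resetting is exponential as in \eqref{expY}, \eqref{eq:p} reads $p=\int_0^\infty e^{-rt}\,\dd F_\tau(t)$, and integrating by parts, using $F_\tau(0)=0$ from \eqref{Ftau0} (so the boundary term at $0$ vanishes by right-continuity) and $e^{-rt}F_\tau(t)\to0$ as $t\to\infty$, gives the clean representation
\[
p=r\int_0^\infty e^{-rt}F_\tau(t)\,\dd t.
\]
The hypothesis \eqref{lim_thmlognormal} says precisely that $\ln F_\tau(t)\sim -C/t$ as $t\to0^+$, so for every $\eps>0$ there is a $t_0>0$ with
\[
e^{-(C+\eps)/t}\le F_\tau(t)\le e^{-(C-\eps)/t}\quad\text{for }0<t<t_0 .
\]
Inserting these two-sided bounds reduces everything to the model integral $I(a)\coloneqq r\int_0^\infty e^{-rt-a/t}\,\dd t$ for $a=C\pm\eps$.

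For the upper bound I would split at $t_0$: on $(0,t_0)$ I replace $F_\tau(t)$ by $e^{-(C-\eps)/t}$ and extend the range to $(0,\infty)$, while on $(t_0,\infty)$ I use $F_\tau\le1$, giving $r\int_{t_0}^\infty e^{-rt}\,\dd t=e^{-rt_0}$; thus $p\le I(C-\eps)+e^{-rt_0}$. For the lower bound I would discard $t>t_0$ and compare with $I(C+\eps)$ minus the negligible tail $r\int_{t_0}^\infty e^{-rt-(C+\eps)/t}\,\dd t\le e^{-rt_0}$, giving $p\ge I(C+\eps)-e^{-rt_0}$. In both directions the tail contributions are $O(e^{-rt_0})$, which is exponentially smaller than the anticipated rate $e^{-2\sqrt{Cr}}$ because $rt_0$ grows linearly in $r$ while $\sqrt{Cr}$ grows only like $r^{1/2}$. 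Thus $p$ is sandwiched, up to such negligible terms, between $I(C-\eps)$ and $I(C+\eps)$.

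The core estimate is the asymptotics of $I(a)$. The exponent $\phi(t)=rt+a/t$ is minimized at the saddle $t^\ast=\sqrt{a/r}$ with value $\phi(t^\ast)=2\sqrt{ar}$, and since $\phi''(t^\ast)\sim r^{3/2}$ the integral concentrates in a neighborhood of $t^\ast$ of width $\sim r^{-3/4}$. Either by Laplace's method or, most cleanly, by the exact identity $\int_0^\infty e^{-rt-a/t}\,\dd t=2\sqrt{a/r}\,K_1\!\left(2\sqrt{ar}\right)$ together with $K_1(z)\sim\sqrt{\pi/(2z)}\,e^{-z}$, one finds $I(a)=r^{\theta}e^{-2\sqrt{ar}}(1+o(1))$ for an explicit exponent $\theta$, hence $\ln I(a)\sim-2\sqrt{ar}$. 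Taking logarithms of the sandwich, dividing by $-2\sqrt{Cr}$, and letting $\eps\to0^+$ yields $\ln p\sim-\sqrt{4Cr}$. The main obstacle is precisely this Laplace step carried out with only logarithmic control of $F_\tau$: one must verify that the saddle $t^\ast=\sqrt{C/r}$ eventually lies inside $(0,t_0)$ and that integrating $e^{-\phi}$ over a neighborhood of $t^\ast$ of the correct width captures the full exponential rate, so that the unknown subexponential prefactor of $F_\tau$ does not affect the answer at the $\sqrt{r}$ scale.

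Finally, the moment statement follows from Theorem~\ref{main}. Taking $p_0=p$ (legitimate since $p\sim p$ trivially and $p>0$ for all $r$), \eqref{main2} gives $\E[T^m]\sim m!/(rp)^m$, equivalently $\ln\E[T^m]=\ln m!-m\ln r-m\ln p+o(1)$. Hence $\ln(r^m\E[T^m])=-m\ln p+\ln m!+o(1)$, and since $\ln p\sim-\sqrt{4Cr}\to-\infty$ dominates the bounded remainder, $\ln(r^m\E[T^m])\sim-m\ln p\sim m\sqrt{4Cr}$, as claimed.
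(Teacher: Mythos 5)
Your proof is correct, and its scaffolding coincides with the paper's: the paper obtains this theorem as the $d=1$ case of Theorem~\ref{thmlog}, whose proof uses exactly your two-sided bounds $e^{-(C+\eps)/t}\le F_\tau(t)\le e^{-(C-\eps)/t}$ near $t=0$, the same integration by parts, the same observation that the tail and boundary terms are $O(re^{-r\eta})$ and hence invisible on the $\sqrt{r}$ scale, the same $\limsup$/$\liminf$ sandwich with $\eps\to0^+$, and the same derivation of the moment asymptotics from \eqref{main2}. The one genuine difference is the key estimate for the model integral: the paper's Proposition~\ref{proplap} treats $\int_0^\eta e^{-rt}t^be^{-C/t^d}\,\dd t$ for general $d>0$ by rescaling about the saddle $t^*=(Cd/r)^{1/(d+1)}$ and applying Laplace's method, whereas you exploit the $d=1$ closed form $\int_0^\infty e^{-rt-a/t}\,\dd t=2\sqrt{a/r}\,K_1\big(2\sqrt{ar}\big)$ together with the standard asymptotics of $K_1$. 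Your route is cleaner for this particular theorem: it is exact, so the ``main obstacle'' you flag (justifying Laplace's method with only logarithmic control of $F_\tau$) never actually arises — the unknown prefactor of $F_\tau$ is already absorbed by the $\eps$-sandwich, and what remains are exact model integrals. What the paper's proposition buys is generality: it covers all $d>0$ (needed for the subdiffusive Theorem~\ref{thmlog}, where no elementary closed form is available) and yields the precise prefactor $\mu r^{\beta-1}\exp(-\gamma r^{d/(d+1)})$ required by Theorems~\ref{thmlinearnormal} and~\ref{thm4}, not merely the logarithmic rate needed here.
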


In applications of interest, the constant $C$ in \eqref{lim_thmlognormal} is a  characteristic timescale of diffusive search. Typically, $C$ is given by
\begin{align}\label{Ctypical}
    C = \frac{L^2}{4D}>0,
\end{align}
where $L>0$ is the shortest distance (in an appropriate distance metric) the searcher must travel to reach the target, and $D>0$ is a characteristic diffusion coefficient \cite{lawley2020uni}.

The next result yields stronger conclusions about the asymptotics of $p$ and $T$ by assuming more detailed information about the short-time behavior of $F_{\tau}$.

\begin{theorem} \label{thmlinearnormal}
For exponential resetting as in \eqref{expY}, if
\begin{align} \label{thmlinearnormalF}
    F_{\tau}(t) \sim At^be^{-C/t} \quad\text{as }t\to 0^+,
\end{align}
for $A>0$, $C>0$, and $b\in\mathbb{R}$, then ${{p}}$ in \eqref{eq:p} and $T$ in \eqref{eq:T} satisfy
\begin{align}
    p
    &\sim p_{0}:=A\sqrt{\pi C^{\frac{2b+1}{2}}} r^{\frac{1-2b}{4}} \exp(-\sqrt{4Cr}) \quad\text{as }r\to\infty,\label{p0diff}\\
    \mathbb{E}[T^{m}] 
    &\sim m!\bigg(\frac{1}{A\sqrt{\pi C^{\frac{2b+1}{2}}}} r^{\frac{2b-5}{4}}\exp(\sqrt{4Cr})\bigg)^{m} \quad\text{as }r\to\infty.\nonumber
\end{align}
\end{theorem}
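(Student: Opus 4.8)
The plan is to reduce the statement about $p$ to the asymptotic evaluation of a Laplace-type integral and then obtain the moment asymptotics directly from Theorem~\ref{main}. Since resetting is exponential, $S_\sigma(t)=e^{-rt}$, so $p=\int_0^\infty e^{-rt}\,\dd F_\tau(t)$ is the Laplace--Stieltjes transform of $F_\tau$. Because $F_\tau(0)=0$ and $F_\tau$ is bounded, integrating by parts converts this into the ordinary Laplace transform $p=r\int_0^\infty e^{-rt}F_\tau(t)\,\dd t$. This is the form I would work with, since the hypothesis \eqref{thmlinearnormalF} describes $F_\tau$ itself (not its increments) near $t=0$.

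First I would substitute the hypothesized equivalence $F_\tau(t)\sim At^be^{-C/t}$ to reduce the problem to the asymptotics of $\int_0^\infty t^b e^{-rt-C/t}\,\dd t$ as $r\to\infty$. The exponent $-(rt+C/t)$ has a unique interior maximum at $t^\ast=\sqrt{C/r}$ with value $-\sqrt{4Cr}$, and $t^\ast\to0$, which is exactly why only the short-time behavior of $F_\tau$ matters. I would evaluate this integral either by Laplace's (saddle-point) method around $t^\ast$ or, more cleanly, by recognizing the exact identity $\int_0^\infty t^b e^{-rt-C/t}\,\dd t=2(C/r)^{(b+1)/2}K_{b+1}(2\sqrt{Cr})$, where $K_\nu$ is the modified Bessel function of the second kind, and then inserting the standard large-argument asymptotic $K_\nu(z)\sim\sqrt{\pi/(2z)}\,e^{-z}$. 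Multiplying by $Ar$ and collecting the powers of $r$ and $C$ yields $p\sim A\sqrt{\pi}\,C^{(2b+1)/4}r^{(1-2b)/4}e^{-\sqrt{4Cr}}$, which is precisely the $p_0$ in \eqref{p0diff}.

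The main obstacle is making the replacement of $F_\tau$ by its small-$t$ equivalent rigorous inside the integral. I would handle this by a standard Laplace-method splitting: fix a small $\delta>0$ and write $r\int_0^\infty=r\int_0^\delta+r\int_\delta^\infty$. On $[\delta,\infty)$ the integrand is bounded by a constant times $e^{-r\delta}$, which is exponentially smaller than $e^{-\sqrt{4Cr}}$ and hence negligible. On $[0,\delta]$, the equivalence \eqref{thmlinearnormalF} lets me sandwich $F_\tau(t)$ between $(1\pm\eps)At^be^{-C/t}$ for $t\le\delta$ (shrinking $\delta$ as needed for each $\eps>0$); since the saddle point $t^\ast=\sqrt{C/r}$ lies in $(0,\delta)$ for all large $r$, the truncated integral $\int_0^\delta t^b e^{-rt-C/t}\,\dd t$ has the same leading asymptotics as the full integral over $[0,\infty)$, and letting $\eps\to0$ gives $p\sim p_0$. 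Some care is needed for the prefactor $At^b$, in particular the case $b<0$ where $t^b$ blows up at the origin, but the factor $e^{-C/t}$ dominates any polynomial and keeps everything integrable near $0$.

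Finally, the moment asymptotics follow immediately from Theorem~\ref{main}: having established $p\sim p_0$ with the explicit $p_0$ of \eqref{p0diff}, the conclusion \eqref{main2} gives $\E[T^m]\sim m!/(rp_0)^m$. Substituting $rp_0=A\sqrt{\pi}\,C^{(2b+1)/4}r^{(5-2b)/4}e^{-\sqrt{4Cr}}$ and collecting exponents reproduces the stated formula for $\E[T^m]$, completing the proof.
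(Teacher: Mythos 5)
Your proposal is correct and follows essentially the same route as the paper: the paper proves Theorem~\ref{thmlinearnormal} as the $d=1$ case of Theorem~\ref{thm4}, whose proof likewise integrates by parts, sandwiches $F_\tau(t)$ between $(1\pm\eps)At^be^{-C/t}$ near the origin, discards the exponentially small tail and boundary terms, evaluates the truncated integral by Laplace's method about $t^*=\sqrt{C/r}$ (Proposition~\ref{proplap}), and then invokes Theorem~\ref{main} for the moments. Your modified-Bessel-function identity $\int_0^\infty t^b e^{-rt-C/t}\,\dd t=2(C/r)^{(b+1)/2}K_{b+1}(2\sqrt{Cr})$ is just a clean alternative to the saddle-point evaluation in that proposition; it yields the same $p_0$.
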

In addition to the timescale $C$ in \eqref{Ctypical}, the constants $A$ and $b$ in \eqref{thmlinearnormalF} encode more details about the diffusive search process \cite{Linn_2022}. Examples of these constants in specific scenarios are given in section~\ref{diff}.

\subsubsection{Subdiffusion}

We now consider the case of subdiffusive search, where subdiffusion is modeled by a fractional Fokker-Planck equation \cite{metzler1999}. The results are analogous to the case of diffusion above, except the formulas are more complicated because the short-time behavior of $F_{\tau}$ depends on the subdiffusive exponent. We apply the following two theorems to specific examples in section~\ref{sub}.

\begin{theorem} \label{thmlog}
For exponential resetting as in \eqref{expY}, if
\begin{align} \label{lim_thmlog}
    \lim_{t\to 0^+} t^d \ln F_{\tau}(t) = -C < 0,
\end{align}
for $d>0$, then ${{p}}$ in \eqref{eq:p} and $T$ in \eqref{eq:T} satisfy
\begin{align}
\ln p
&\sim -\gamma r^{d/(d+1)}\quad\text{as }r\to\infty, \label{thmlogpdp}\\
\ln (r^{m}\E[T^{m}])
&\sim m\gamma r^{d/(d+1)}\quad\text{as }r\to\infty, \label{thmlogpdt}
\end{align}
where $\gamma = \frac{d+1}{d^{d/(d+1)}}C^{\frac{1}{d+1}}>0$.
\end{theorem}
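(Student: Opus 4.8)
The plan is to reduce the success probability $p$ to a Laplace integral and extract its exponential rate by the Laplace (saddle-point) method, using only the short-time hypothesis \eqref{lim_thmlog}. Since $S_\sigma(t)=e^{-rt}$ for exponential resetting, integrating \eqref{eq:p} by parts---the boundary terms vanish by \eqref{Ftau0} and the boundedness of $F_\tau$---gives the clean representation
\[
  p = r\int_0^\infty e^{-rt}F_\tau(t)\,\dd t.
\]
I would then show that this integral is dominated by small $t$, so that the hypothesis on $F_\tau(t)$ as $t\to0^+$ is all that is needed.

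The heart of the argument is a two-sided estimate. Fix $\epsilon>0$; by \eqref{lim_thmlog} there is a $\delta>0$ with $e^{-(C+\epsilon)/t^d}\le F_\tau(t)\le e^{-(C-\epsilon)/t^d}$ for $t\in(0,\delta)$. For the tail $t\ge\delta$, bounding $F_\tau\le1$ yields a contribution at most $e^{-r\delta}$, which is negligible on the scale $r^{d/(d+1)}$ since $d/(d+1)<1$. On $(0,\delta)$ the exponent $\phi_\pm(t):=rt+(C\pm\epsilon)/t^d$ is minimized at $t^*_\pm=(d(C\pm\epsilon)/r)^{1/(d+1)}$, and a direct computation gives $\phi_\pm(t^*_\pm)=\gamma^\pm_\epsilon\, r^{d/(d+1)}$ with $\gamma^\pm_\epsilon=\frac{d+1}{d^{d/(d+1)}}(C\pm\epsilon)^{1/(d+1)}$; note $\gamma^\pm_\epsilon\to\gamma$ as $\epsilon\to0$ and $t^*_\pm\to0$ as $r\to\infty$, confirming that the mass concentrates at small $t$.

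For the upper bound I would dominate the integrand on $(0,\delta)$ by its maximum, giving $p\le r\delta\,e^{-\gamma^-_\epsilon r^{d/(d+1)}}+e^{-r\delta}$; taking logarithms, dividing by $r^{d/(d+1)}$, and letting $r\to\infty$ then $\epsilon\to0$ yields $\limsup_{r\to\infty} r^{-d/(d+1)}\ln p\le-\gamma$. For the matching lower bound I would restrict the integral to a shrinking neighborhood of the saddle $t^*_+$ on which the Taylor expansion $\phi_+(t)=\phi_+(t^*_+)+\tfrac12\phi_+''(t^*_+)(t-t^*_+)^2+\cdots$ keeps $\phi_+$ within $O(1)$ of its minimum; since such an interval has length only polynomially small in $r$ and the exponent there equals $\gamma^+_\epsilon r^{d/(d+1)}+O(1)$, I obtain $\ln p\ge-\gamma^+_\epsilon r^{d/(d+1)}+O(\ln r)$, whence $\liminf_{r\to\infty} r^{-d/(d+1)}\ln p\ge-\gamma$ after $\epsilon\to0$. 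Combining the two bounds gives \eqref{thmlogpdp}, namely $\ln p\sim-\gamma r^{d/(d+1)}$.

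The moment statement \eqref{thmlogpdt} then follows with no further analysis of the integral: applying Theorem~\ref{main} with $p_0=p$ gives $\E[T^m]\sim m!/(rp)^m$, so $r^m\E[T^m]\sim m!/p^m$ and hence $\ln(r^m\E[T^m])=\ln m!-m\ln p+o(1)$; since $\ln p\to-\infty$ the bounded terms are absorbed and $\ln(r^m\E[T^m])\sim-m\ln p\sim m\gamma r^{d/(d+1)}$. I expect the main obstacle to be the lower bound in the Laplace estimate: unlike the upper bound it requires showing that a genuinely nonnegligible amount of mass accumulates near the saddle $t^*_+$, and this must be done using only the one-sided bound extracted from \eqref{lim_thmlog} rather than any sharp asymptotic for $F_\tau$. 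Everything else is a routine, if careful, computation.
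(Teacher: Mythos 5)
Your proposal is correct, and its skeleton matches the paper's: the same two-sided $\varepsilon$-sandwich $e^{-(C+\varepsilon)/t^d}\le F_{\tau}(t)\le e^{-(C-\varepsilon)/t^d}$ near $t=0$, integration by parts to pass from $\int e^{-rt}\,\dd F_{\tau}$ to $r\int e^{-rt}F_{\tau}(t)\,\dd t$, disposal of the tail $t\ge\delta$ as $O(re^{-r\delta})$, a limsup/liminf squeeze, $\varepsilon\to0$, and then \eqref{thmlogpdt} from Theorem~\ref{main} (i.e.\ \eqref{main2}). Where you differ is the key analytic input: the paper evaluates the comparison integrals $\int_0^{\eta} re^{-rt}e^{-(C\pm\varepsilon)/t^d}\,\dd t$ by invoking Proposition~\ref{proplap}, a full Laplace-method computation that delivers the sharp asymptotics $\mu_{\pm\varepsilon}r^{\beta}\exp(-\gamma_{\pm\varepsilon}r^{d/(d+1)})$ including prefactors, whereas you replace that proposition by elementary two-sided bounds --- sup-of-integrand times interval length for the upper bound, and a polynomially small window around the saddle $t^*_{\pm}$ on which the exponent stays within $O(1)$ of its minimum for the lower bound. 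Your route is more self-contained and uses exactly the precision the logarithmic statement requires (the prefactors are irrelevant after taking $\ln$ and dividing by $r^{d/(d+1)}$); the paper's route costs a sharper lemma but reuses it wholesale for Theorem~\ref{thm4}, where the prefactor $\mu r^{\beta}$ genuinely matters. You also correctly identified the only delicate point, the lower bound: your check that the window width $\asymp (\phi_+''(t^*_+))^{-1/2}\asymp r^{-(d+2)/(2(d+1))}$ is $o(t^*_+)$ (so that $\phi_+''$ is uniformly comparable there and the quadratic approximation is legitimate, while $t^*_+\to0$ keeps the window inside $(0,\delta)$) is exactly what makes the $O(\ln r)$ error term valid.
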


For a subdiffusive exponent $\alpha\in(0,1)$, meaning that the mean-squared displacement of the search process without resetting grows sublinearly in time $t$ according to the power law $t^{\alpha}$, we typically have that \cite{lawley2020sub}
\begin{align}\label{alphastuff}
d=\frac{\alpha}{2-\alpha},\quad
C=(2-\alpha)\alpha^{\alpha/(2-\alpha)}\Big(\frac{L^{2}}{4K_{\alpha}}\Big)^{1/(2-\alpha)},
\end{align}
where the lengthscale $L>0$ is as in \eqref{Ctypical} and $K_{\alpha}$ is the characteristic subdiffusion exponent (with dimensions $(\text{length})^{2}(\text{time})^{-\alpha}$). Hence, the quantity $\gamma r^{d/(d+1)}$ appearing in \eqref{thmlogpdp}-\eqref{thmlogpdt} is typically given by
\begin{align*}
\gamma r^{d/(d+1)}
=\sqrt{r^{\alpha}L^{2}/K_{\alpha}}.
\end{align*}

\begin{theorem} \label{thm4}
For exponential resetting as in \eqref{expY}, if
\begin{align} \label{thm4F}
    F_{\tau}(t) \sim At^be^{-C/t^d} \quad\text{as }t\to 0^+,
\end{align}
for $A>0$, $C>0$, $d>0$, and $b\in\mathbb{R}$, then ${{p}}$ in \eqref{eq:p} and $T$ in \eqref{eq:T} satisfy
\begin{align}
    p
    &\sim p_{0}:={\mu} r^{\beta} \exp(-\gamma r^{d/(d+1)}) \quad\text{as }r\to\infty,\\
    \mathbb{E}[T^{m}]
    &\sim m!\Big(\frac{1}{{\mu}} r^{-\beta-1}\exp(\gamma r^{d/(d+1)})\Big)^{m}\quad\text{as }r\to\infty,
\end{align}
where
\begin{align*}
    {\mu} = \sqrt{\frac{2\pi A^2 (Cd)^{\frac{2b+1}{d+1}}}{d+1}}, \quad \beta = \frac{d-2b}{2d+2}, \quad \gamma = \frac{d+1}{d^{d/(d+1)}}C^{\frac{1}{d+1}}.
\end{align*}
\end{theorem}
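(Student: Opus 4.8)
The plan is to reduce the computation of $p$ to a Laplace-type integral, evaluate it by the Laplace (saddle point) method, and then obtain the moment statement directly from Theorem~\ref{main}. For exponential resetting $S_\sigma(t)=e^{-rt}$, the quantity $p$ in \eqref{eq:p} is the Laplace--Stieltjes transform $p=\int_0^\infty e^{-rt}\,\dd F_\tau(t)$. Integrating by parts and using $F_\tau(0)=0$ together with the boundedness of $F_\tau$ (so both boundary terms vanish), I would rewrite this as
\[
p = r\int_0^\infty e^{-rt}F_\tau(t)\,\dd t.
\]
Since $e^{-rt}$ concentrates near $t=0$ as $r\to\infty$, the short-time hypothesis \eqref{thm4F} should dictate the asymptotics. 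Concretely, I would fix $\eta\in(0,1)$, choose $\delta>0$ with $(1-\eta)At^be^{-C/t^d}\le F_\tau(t)\le(1+\eta)At^be^{-C/t^d}$ for $0<t<\delta$, and split the integral at $\delta$. The tail satisfies $\int_\delta^\infty e^{-rt}F_\tau(t)\,\dd t\le e^{-r\delta}/r$, which is exponentially small in $r$ and hence negligible against the leading contribution, which decays only like $e^{-\gamma r^{d/(d+1)}}$ with $d/(d+1)<1$.

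It then remains to analyze $\int_0^\delta t^be^{-g(t)}\,\dd t$ with phase $g(t)=C/t^d+rt$. This phase has a unique interior minimizer $t_\ast=(Cd/r)^{1/(d+1)}$, at which one computes $g(t_\ast)=\gamma r^{d/(d+1)}$ and $g''(t_\ast)=(d+1)(Cd)^{-1/(d+1)}r^{(d+2)/(d+1)}$. Laplace's method then predicts
\[
\int_0^\delta t^be^{-g(t)}\,\dd t \sim t_\ast^b\,e^{-g(t_\ast)}\sqrt{\frac{2\pi}{g''(t_\ast)}}\quad\text{as }r\to\infty.
\]
Multiplying by $rA$ and collecting the powers of $r$ and the constants $C,d,b$, I expect the $r$-exponent to collapse to $\beta=(d-2b)/(2d+2)$ and the prefactor to reduce exactly to $\mu$, giving $p\sim\mu r^\beta e^{-\gamma r^{d/(d+1)}}=p_0$. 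Finally, letting $\eta\to0$ eliminates the $(1\pm\eta)$ factors and upgrades the two-sided bound to the asymptotic equivalence $p\sim p_0$.

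With $p\sim p_0$ in hand, Theorem~\ref{main} immediately yields $\E[T^m]\sim m!/(rp_0)^m$; substituting $rp_0=\mu r^{1+\beta}e^{-\gamma r^{d/(d+1)}}$ produces the stated moment asymptotics, so no further work is needed for the second display.

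The main obstacle is the rigorous execution of Laplace's method with a \emph{moving} saddle point. Because $t_\ast\to0$ while the effective Gaussian width $\sim 1/\sqrt{g''(t_\ast)}\sim r^{-(d+2)/(2(d+1))}$ shrinks faster than $t_\ast\sim r^{-1/(d+1)}$ (their ratio is $r^{-d/(2(d+1))}\to0$), I would center the integral at $t_\ast$, Taylor-expand $g$ to second order, and control the cubic remainder on the shrinking interval as well as the variation of the factor $t^b$, which is $\sim t_\ast^b$ to leading order precisely because the width/$t_\ast$ ratio vanishes. Carefully bounding these errors while tracking the constants so that the prefactor reduces exactly to $\mu$ is where the bulk of the effort lies; the argument is structurally identical to the $d=1$ case established in Theorem~\ref{thmlinearnormal}, and I would carry it out mutatis mutandis for general $d>0$.
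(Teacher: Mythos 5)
Your proposal is correct and follows essentially the same route as the paper: integration by parts to pass from the Stieltjes integral to $r\int_0^\eta e^{-rt}F_\tau(t)\,\dd t$, a split at a small cutoff with two-sided $(1\pm\eps)$ bounds and an exponentially small tail, Laplace-type asymptotics for $\int_0^\eta e^{-rt}t^b e^{-C/t^d}\,\dd t$, and Theorem~\ref{main} for the moments. The moving-saddle difficulty you flag as the bulk of the effort is dispatched in the paper's Proposition~\ref{proplap} by the substitution $s=t/t^*$ with $t^*=(Cd/r)^{1/(d+1)}$, which pins the saddle at $s=1$ and reduces the integral to a standard fixed-phase Laplace form with large parameter $x=r^{d/(d+1)}$, exactly recovering your constants $\mu$, $\beta$, $\gamma$.
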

\noindent
If $d=1$ in \eqref{lim_thmlog} and \eqref{thm4F} (and thus $\alpha=1$ in \eqref{alphastuff}), Theorem~\ref{thmlog} reduces to Theorem~\ref{thmlognormal} and Theorem~\ref{thm4} reduces to Theorem~\ref{thmlinearnormal}. 

\subsubsection{Other search processes}

We now consider the case that $F_{\tau}(t)$ decays according to a power law as $t\to0^{+}$. This can describe the case of (i) diffusive search in which the searcher can start arbitrarily close to the target (see section~\ref{unif}), (ii) search by a continuous-time Markov chain on a discrete state space (see section~\ref{ctmc}), and (iii) superdiffusive search (see \cite{lawley2021super}).

\begin{proposition}\label{pfeller}
For exponential resetting as in \eqref{expY}, if
\begin{align} \label{FFeller}
    F_{\tau}(t) \sim At^b \quad\text{as }t\to 0^+,
\end{align}
for $A>0$ and $b>0$, then $p$ in \eqref{eq:p} and $T$ in \eqref{eq:T} satisfy
\begin{align}
    p
    &\sim p_{0}:=\Gamma(b+1)Ar^{-b}\quad\text{as }r\to\infty,\label{fellerp}\\
    \mathbb{E}[T^{m}]
    &\sim m!\Big(\frac{1}{A\Gamma(b+1)} r^{b-1}\Big)^{m}\quad\text{as }r\to\infty,\label{fellerT}
\end{align}
where $\Gamma(\beta) \coloneqq  \int_0^{\infty} z^{\beta - 1}e^{-z}\,\text{d}z$ denotes the gamma function.
\end{proposition}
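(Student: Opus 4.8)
The plan is to obtain the asymptotics of $p$ directly and then feed it into Theorem~\ref{main} for the moments. For exponential resetting $S_\sigma(t)=e^{-rt}$, so \eqref{eq:p} makes $p$ the Laplace--Stieltjes transform $p=\int_0^\infty e^{-rt}\,\mathrm{d}F_\tau(t)$. The first step is a Stieltjes integration by parts, whose boundary terms vanish (at $t=0$ since $F_\tau(0)=0$ by \eqref{Ftau0}, and at $t=\infty$ since $F_\tau\le1$ while $e^{-rt}\to0$), giving
\[
p = r\int_0^\infty e^{-rt}F_\tau(t)\,\mathrm{d}t .
\]
Rescaling by $u=rt$ then yields $r^{b}p=\int_0^\infty e^{-u}\,r^{b}F_\tau(u/r)\,\mathrm{d}u$, which isolates the quantity I want to send to a limit.

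For each fixed $u>0$, hypothesis \eqref{FFeller} gives $r^{b}F_\tau(u/r)\to Au^{b}$, so the integrand converges pointwise to $Au^{b}e^{-u}$. The crux is justifying the interchange of limit and integral by dominated convergence; this is the main obstacle, because the trivial bound $F_\tau\le1$ only gives $r^{b}F_\tau(u/r)\le r^{b}$, which is not uniform in $r$. I would overcome this by splitting the domain at $u=\delta r$, where $\delta>0$ is chosen (using \eqref{FFeller}) so that $F_\tau(t)\le 2At^{b}$ for $t\in(0,\delta)$. On $u<\delta r$ this bound gives $r^{b}F_\tau(u/r)\le 2Au^{b}$, while on $u\ge\delta r$ the constraint $r\le u/\delta$ converts $r^{b}F_\tau(u/r)\le r^{b}$ into $r^{b}\le u^{b}/\delta^{b}$. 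Hence $r^{b}F_\tau(u/r)\le M u^{b}$ for all $u>0$ with $M=\max\{2A,\delta^{-b}\}$, and $Mu^{b}e^{-u}$ is integrable. Dominated convergence then gives $\lim_{r\to\infty}r^{b}p=\int_0^\infty Au^{b}e^{-u}\,\mathrm{d}u=A\,\Gamma(b+1)$, which is precisely \eqref{fellerp} with $p_0=\Gamma(b+1)Ar^{-b}$.

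The moment statement \eqref{fellerT} then drops out of Theorem~\ref{main}. One first verifies the standing hypotheses of section~\ref{mainsection}: \eqref{Ftau0}--\eqref{Ftau} hold because $F_\tau(t)\sim At^{b}$ is positive for small $t>0$; exponential resetting is the case $Y\sim\mathrm{Exponential}(1)$, which has unit mean and finite moment generating function near the origin as required by \eqref{Yunitmean}--\eqref{ymgf}; and $p>0$ for each $r$ since $\mathrm{d}F_\tau$ is a nontrivial measure. With the admissible choice $p_0=\Gamma(b+1)Ar^{-b}$, equation \eqref{main2} reads $\E[T^{m}]\sim m!/(rp_0)^{m}$, and substituting $rp_0=A\Gamma(b+1)r^{1-b}$ gives $\E[T^{m}]\sim m!\big(r^{b-1}/(A\Gamma(b+1))\big)^{m}$, as claimed. (Alternatively, the asymptotic for $p$ is a special case of a Karamata-type Tauberian theorem, but the self-contained computation above avoids importing that machinery.)
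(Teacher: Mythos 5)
Your proof is correct, and on the key step \eqref{fellerp} it takes a genuinely different (more self-contained) route than the paper. The paper disposes of that step by citation: it observes that $p$ in \eqref{eq:p} is the Laplace--Stieltjes transform of $F_{\tau}$ and invokes a Karamata-type Tauberian theorem from Feller, then gets the moments from Theorem~\ref{main}, exactly as you do. You instead prove the needed (Abelian) direction from scratch: integration by parts to write $p=r\int_0^\infty e^{-rt}F_{\tau}(t)\,\mathrm{d}t$, the rescaling $u=rt$, and a dominated convergence argument whose dominating function you build correctly by splitting at $u=\delta r$; the resulting bound $r^{b}F_{\tau}(u/r)\le\max\{2A,\delta^{-b}\}u^{b}$ holds for all $u,r>0$ and is independent of $r$, which is precisely what the trivial bound $F_{\tau}\le1$ fails to give. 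What the citation buys the paper is brevity and generality (Feller's theorem also covers regularly varying $F_{\tau}$ with slowly varying corrections, and the harder converse Tauberian direction); what your computation buys is transparency: it shows that only the easy Abelian direction is needed here, with no regular-variation machinery imported. Your verification of the standing hypotheses of section~\ref{mainsection} --- that exponential resetting fits \eqref{Yunitmean}--\eqref{ymgf}, and that \eqref{Ftau} forces \eqref{nontrivial} --- repeats observations already made in the paper and is harmless extra care.
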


The asymptotics of $p$ in \eqref{fellerp} follow from noticing that $p$ is the Laplace-Stieltjes transform of $F_{\tau}(t)$ and applying Tauberian theorems (see, for example, Theorem~3 in section 5 of chapter 8 of \cite{feller-vol-2}). The asymptotics of the moments of $T$ in \eqref{fellerT} then follow from Theorem~\ref{main}.

\section{Examples \& numerical solutions} \label{examples}
The results in section~\ref{results} give the asymptotics of the FPT $T$ as the resetting rate $r$ increases. In this section, we apply these results to several examples and compare them to numerical solutions. The details of the calculations for these examples are given in the Appendix. We assume exponential resetting as in \eqref{expY} in sections~\ref{diff}-\ref{sub} and consider non-exponential resetting in section~\ref{nonexp}.

\subsection{Diffusive search}\label{diff}

Consider a searcher that diffuses with diffusivity $D>0$ in $d\ge1$ spatial dimensions. Assume that the searcher starts at (and is reset to) a position that is distance $L>0$ from the target. Consider the following three scenarios: (i) $d=1$ and the target is a single point, (ii) $d=3$ and the target is a sphere of radius $a>0$, and (iii) $d=3$ and the target is the exterior of a sphere centered at the starting and resetting position (i.e.\ the FPT is the first time the searcher escapes a sphere of radius $L>0$).

In each of these three scenarios, the Laplace transform of the distribution of $T$ and all the moments of $T$ can be calculated analytically. Further, the probability $p$ in \eqref{eq:p} of a successful search and the corresponding asymptotic form $p_{0}$ given by Theorem~\ref{thmlinearnormal} can be computed analytically. In particular, the respective values of $p_{0}$ in \eqref{p0diff} in Theorem~\ref{thmlinearnormal} for these three scenarios are
\begin{align}\label{p0sdiff}
p_{0}=e^{-\sqrt{rL^{2}/D}}, \quad
p_{0}=\frac{1}{1+L/a}e^{-\sqrt{rL^{2}/D}}, \quad
p_{0}=\sqrt{\frac{4rL^{2}}{D}}e^{-\sqrt{rL^{2}/D}}.
\end{align}
These calculations are given in the Appendix.

\begin{figure}
\centering
{\scriptsize(a)}\hspace{-.05\textwidth}
\includegraphics[width=0.495\textwidth]{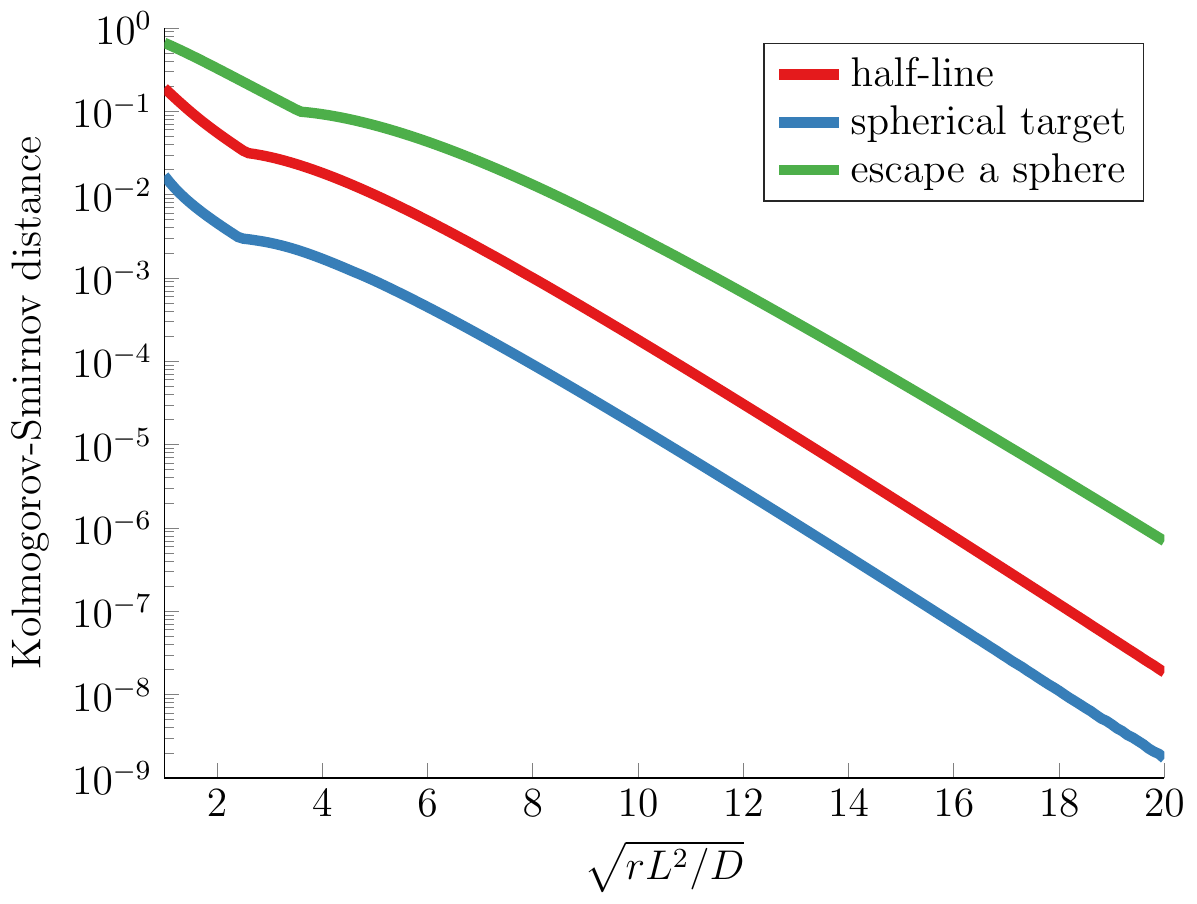}
{\scriptsize(b)}\hspace{-.05\textwidth}
\includegraphics[width=0.495\textwidth]{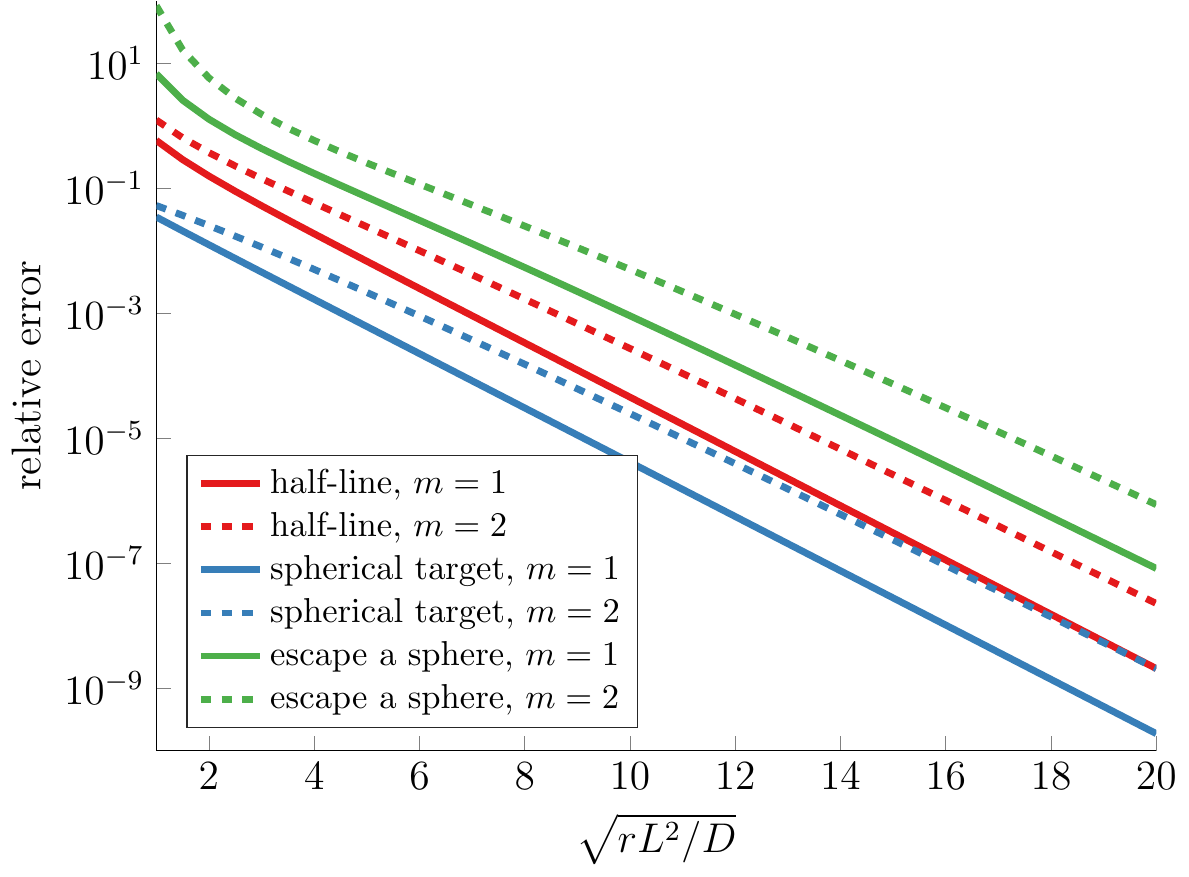}
 \caption{Diffusive search for a target that is distance $L>0$ from the initial and resetting position. See section~\ref{diff} for details.}
 \label{figdiff}
\end{figure}

In Figure~\ref{figdiff}a, we plot the convergence in distribution of $rp_{0}T$ to a unit rate exponential random variable as the dimensionless resetting rate $\sqrt{rL^{2}/D}$ increases for each of these examples. Specifically, we plot the Kolmogorov-Smirnov distance between the distribution of $rp_{0}T$ and a unit rate exponential distribution, defined as
\begin{align}\label{ks}
\sup_{x\ge0}\big|\P(rp_{0}T>x)-e^{-x}\big|.
\end{align}
In agreement with Theorem~\ref{main}, $rp_{0}T$ rapidly converges in distribution to a unit rate exponential random variable. In Figure~\ref{figdiff}b, we plot the relative error between the exact $m$th moment $\E[T^{m}]$ and the frequent resetting estimate $\E[T^{m}]\approx m!/(rp_{0})^{m}$ from Theorems~\ref{main} and \ref{thmlinearnormal} for $m=1$ (solid curves) and $m=2$ (dashed curves). This figure shows that the relative error vanishes exponentially fast as $\sqrt{rL^{2}/D}$ increases. 

These three examples share the general features that if the resetting rate $r$ is much faster than the diffusion rate  (i.e. $r\gg D/L^{2}$), then $T$ is approximately exponentially distributed with rate $rp_{0}$, where $p_{0}$ vanishes exponentially according to $p_{0}\approx e^{-\sqrt{rL^{2}/D}}$ (possibly with a pre-factor that depends on the details of the geometry). While these features can be seen explicitly in the three analytically tractable scenarios described above, they characterize diffusive search with frequent resetting much more generally. Indeed, as long as the searcher cannot start (or reset) arbitrarily close to the target (see section~\ref{unif} for a case that this condition excludes), then the FPT distribution without resetting generally satisfies \cite{lawley2020uni}
\begin{align*}
\lim_{t\to0^{+}}t\ln F_{\tau}(t)
=-L^{2}/D<0,
\end{align*}
where $D>0$ is a characteristic diffusivity and $L>0$ is the shortest distance from the set of initial positions to the target (in an appropriate distance metric). Hence, Theorem~\ref{thmlognormal} yields
\begin{align*}
\ln p\sim-\sqrt{rL^{2}/D}<0\quad\text{as }r\to\infty,
\end{align*}
and thus the moments of $T$ diverge exponentially according to
\begin{align*}
\ln(r^{m}\E[T^{m}])
\sim m\sqrt{rL^{2}/D}\quad\text{as }r\to\infty.
\end{align*}

\subsection{Diffusive search with uniform initial condition}\label{unif}

We now consider diffusive search in which the searcher's starting and resetting positions are not bounded away from the target. Suppose that the searcher diffuses with diffusivity $D>0$ in one spatial dimension with targets at $x=0$ and $x=L>0$, and suppose that the searcher starts and resets to a uniformly distributed position in the interval $[0,L]$. In this case, the FPT distribution without resetting decays according to the following power law at short time,
\begin{align}\label{shortunif}
F_{\tau}(t)
\sim\sqrt{\frac{16 Dt}{\pi L^{2}}}\quad\text{as }t\to0^{+}.
\end{align}
Hence, for frequent resetting, Theorem~\ref{main} implies that $T$ is approximately exponentially distributed with rate $rp_{0}$, where Proposition~\ref{pfeller} yields
\begin{align}\label{p0unifexample}
p\sim p_{0}
=\sqrt{4D/(rL^{2})}\quad\text{as }r\to\infty.
\end{align}
In Figure~\ref{figunif}a, we plot the Kolmogorov-Smirnov distance as in \eqref{ks} for this example (red curve) as the dimensionless resetting rate $\sqrt{rL^{2}/D}$ increases. In Figure~\ref{figunif}a,  we also plot the Kolmogorov-Smirnov distance for scenario (iii) in section~\ref{diff} above (blue curve) except where the searcher starts and resets to uniformly distributed positions in the sphere.

\begin{figure}
  \centering
             {\scriptsize(a)}\hspace{-.05\textwidth}
             \includegraphics[width=0.495\textwidth]{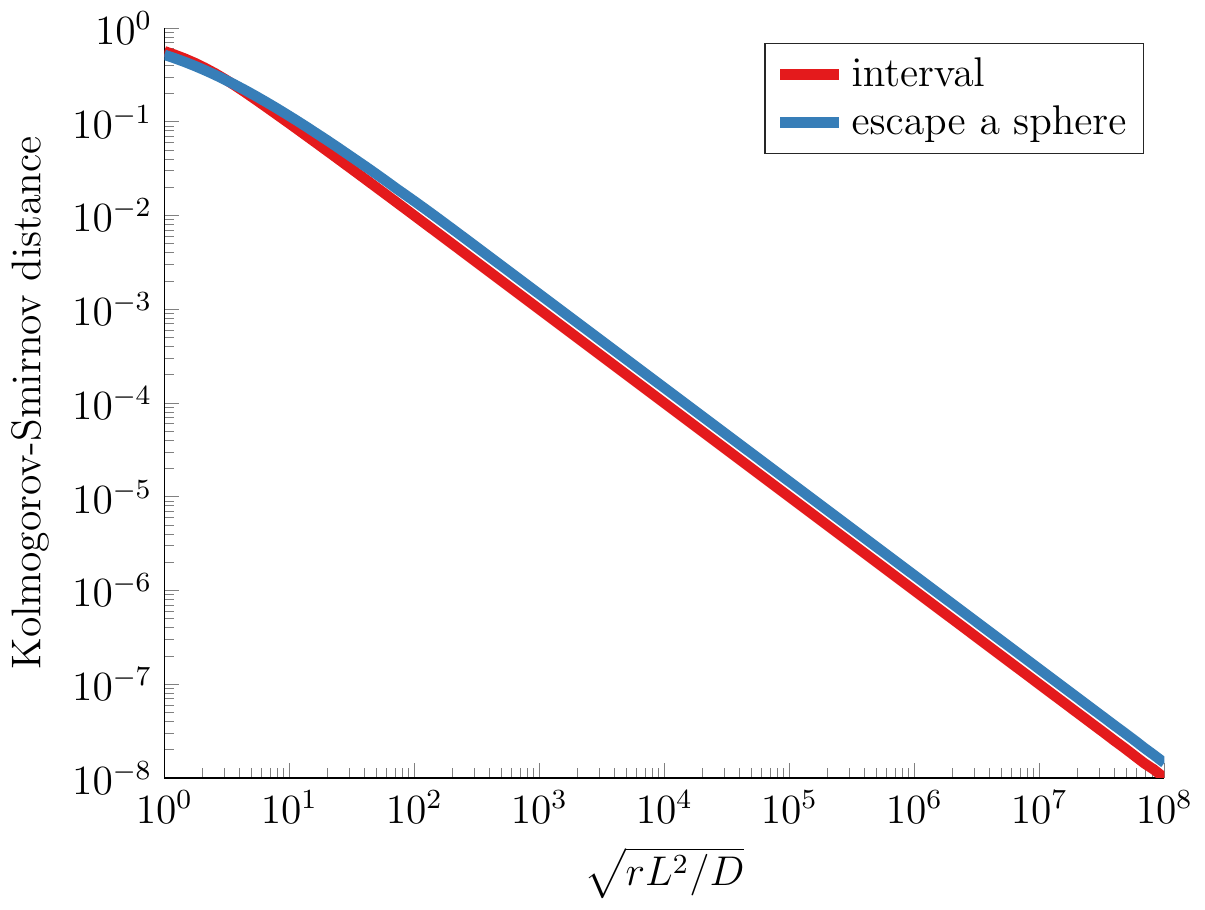}
             {\scriptsize(b)}\hspace{-.05\textwidth}
             \includegraphics[width=0.495\textwidth]{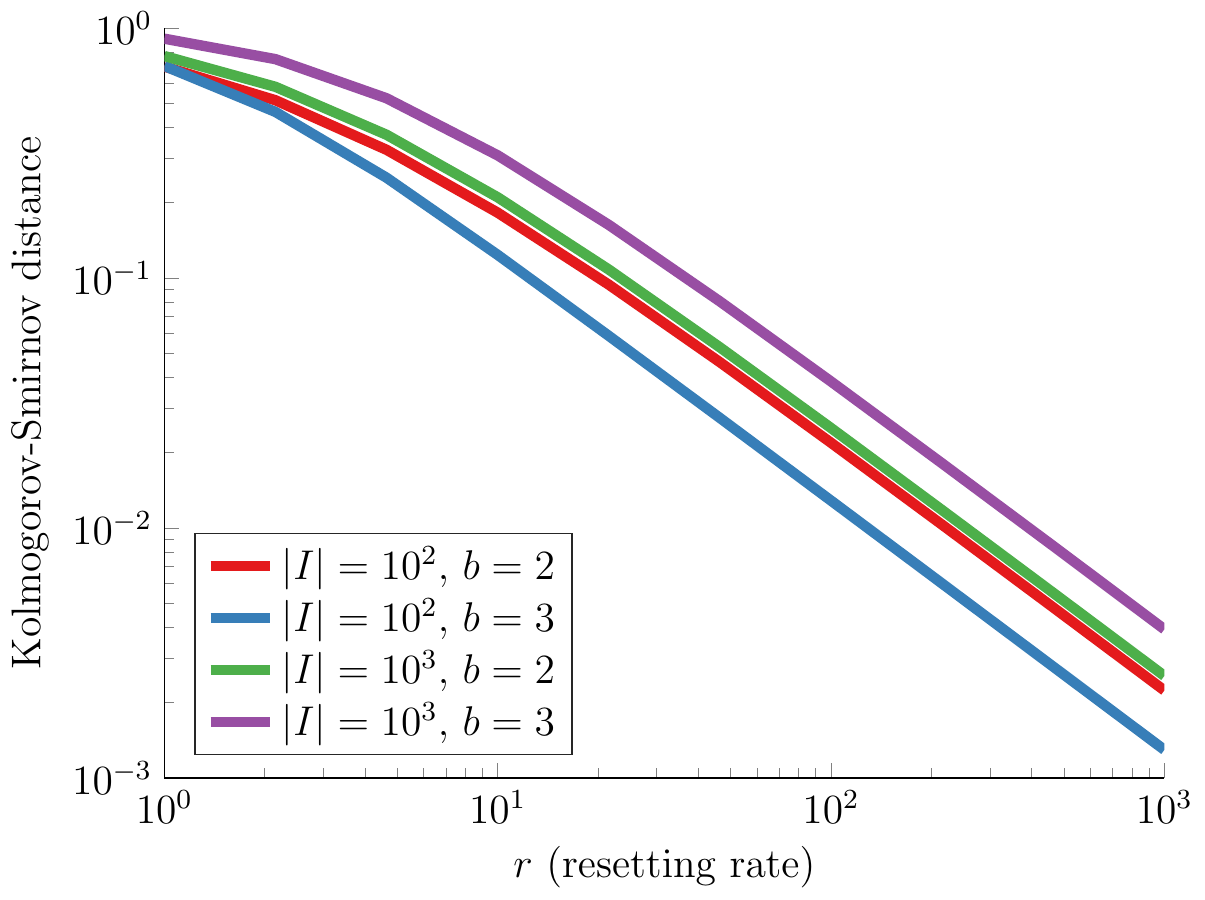}
 \caption{{Panel (a)}: Diffusive search with uniformly distributed initial and resetting positions. {Panel (b)}: Search on a discrete network with $|I|\gg1$ states and $b\ge1$ jumps required to reach the target from the starting and resetting position. See sections~\ref{unif}-\ref{ctmc} for details.}
 \label{figunif}
\end{figure}

\subsection{Search on a discrete network}\label{ctmc}

Suppose the searcher moves by discrete jumps to adjacent nodes on a discrete network \cite{bonomo2021}. Specifically, let $X=\{X(t)\}_{t\ge0}$ be a continuous-time Markov chain on a finite or countably infinite state space $I$. Suppose $X$ starts at (and resets at rate $r$ to) a given state $i_{0}\in I$. Consider the FPT to some target set of states $I_{\t}\subset I$ with $i_{0}\notin I_{\t}$.

Proposition~1 in \cite{lawley2020networks} implies that the cumulative distribution function of the FPT $\tau:=\inf\{t>0:X(t)\in I_{\t}\}$ without resetting decays according to the following power law at short-time,
\begin{align}\label{Fctmc}
F_{\tau}(t)
\sim At^{b}\quad\text{as }t\to0^{+},
\end{align}
where $b\ge1$ is the minimum number of jumps $X$ must take to reach $I_{\t}$ from $i_{0}$ and $A=\Lambda/b!$, where $\Lambda$ is the product of the jump rates of $X$ along this shortest path from $i_{0}$ to $I_{\t}$. (If there are multiple shortest paths, then $\Lambda$ is the sum of the products of the jump rates along these paths.) As a technical aside, \eqref{Fctmc} assumes that the jump rates of $X$ are bounded and $\P(\tau=\infty)\neq1$ (i.e.\ there exists a path from $i_{0}$ to $I_{\t}$).

Hence, Proposition~\ref{pfeller} implies that 
\begin{align*}
p
\sim p_{0}=A\Gamma(b+1) r^{-b}\quad\text{as }r\to\infty,
\end{align*}
and Theorem~\ref{main} yields the distribution and moments of $T$ for frequent resetting. In Figure~\ref{figunif}, we plot the Kolmogorov-Smirnov distance as in \eqref{ks} for this example as the resetting rate $r$ increases for a few different randomly generated networks with the number of states ranging from $|I|=10^{2}$ to $|I|=10^{3}$. The details of this calculation and these networks are given in the Appendix. The convergence in distribution illustrated in Figure~\ref{figunif} shows that despite the complexity of these underlying jump processes, the FPT with frequent resetting depends only on the network properties along the shortest path(s) to the target.

\subsection{Run and tumble}\label{rtp}

Consider a one-dimensional run and tumble particle that switches between velocity $V>0$ and $-V<0$ at Poissonian rate $\lambda>0$ \cite{evans2018, santra2020}. Analogous to the first scenario in section~\ref{diff} on diffusion, suppose the target is at $x=0$ and the searcher starts at and resets to $x=L>0$. The probability of a successful search satisfies
\begin{align}\label{p0rtp}
p
\sim p_{0}
=e^{-r L/V}(\tfrac{1}{2}e^{-\lambda L/V}+\beta/r)\quad\text{as }r\to\infty,
\end{align}
where $\beta=\lambda  e^{-\frac{\lambda  L}{V}} (\lambda  L+V)/(4V)$.
In Figure~\ref{figrtp}a, we plot the Kolmogorov-Smirnov distance as in \eqref{ks} for this example as the dimensionless resetting rate $rL/V$ increases for a few different choices of the dimensionless tumbling rate $\lambda L/V$. The details of this calculation are given in the Appendix.

\begin{figure}
  \centering
             {\scriptsize(a)}\hspace{-.05\textwidth}
             \includegraphics[width=0.495\textwidth]{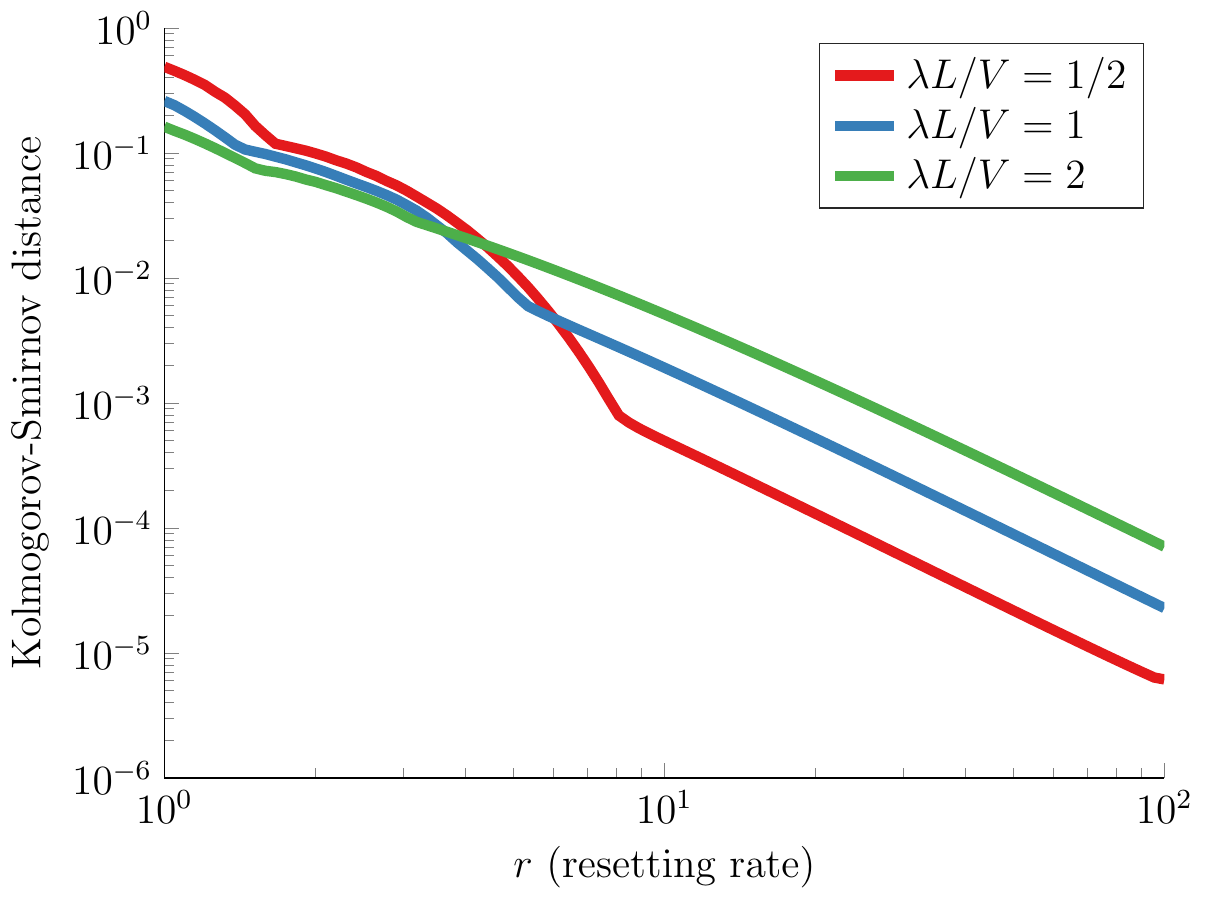}
             {\scriptsize(b)}\hspace{-.05\textwidth}
             \includegraphics[width=0.495\textwidth]{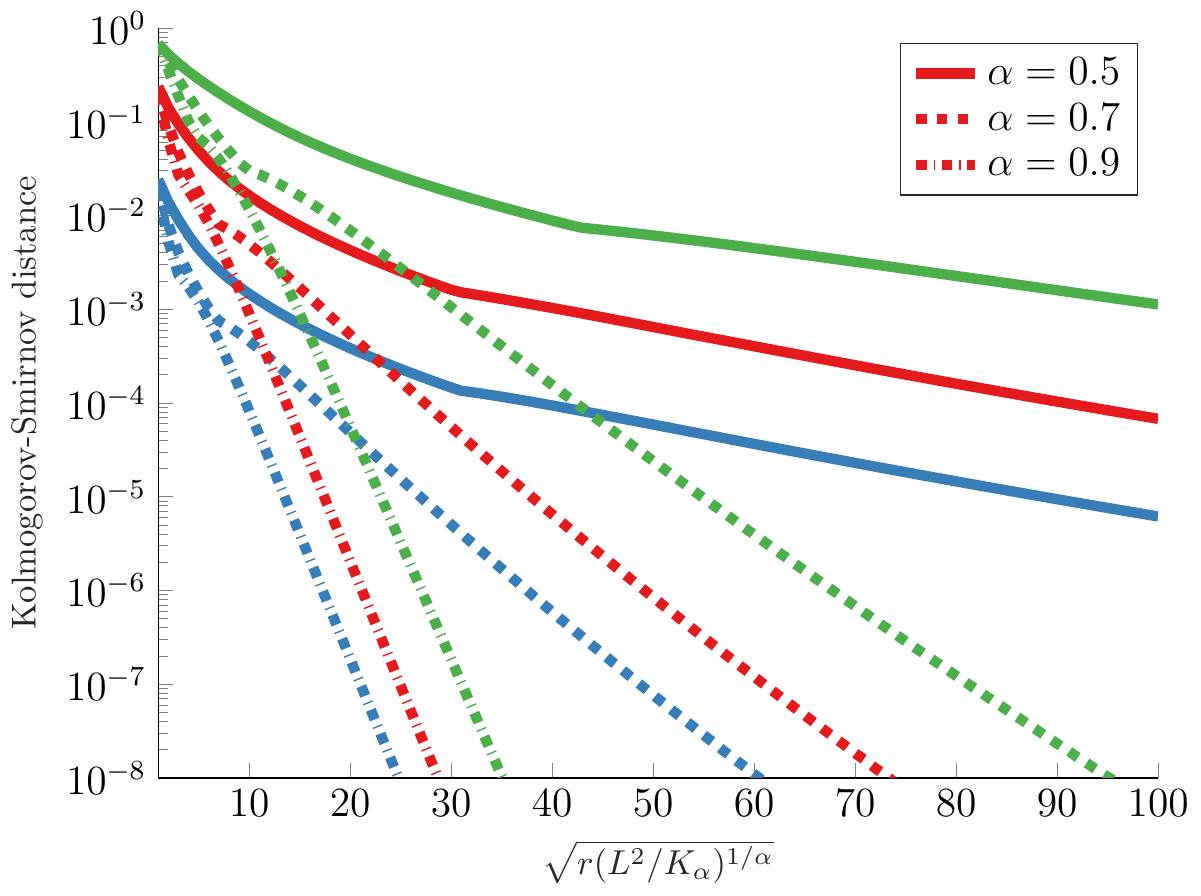}
 \caption{{Panel (a)}: Run and tumble search. {Panel (b)}: Subdiffusive search. See sections~\ref{rtp}-\ref{sub} for details.}
 \label{figrtp}
\end{figure}

\subsection{Subdiffusive search}\label{sub}

Suppose that the searcher moves by subdiffusion in $d\ge1$ spatial dimensions between stochastic resets, meaning that its mean-squared displacement grows in time $t$ according to a sublinear power law $t^{\alpha}$ for $\alpha\in(0,1)$. Concretely, suppose that in between stochastic resets, the probability density $p_{\alpha}(x,t)$ for its position evolves according to the following fractional Fokker-Planck equation \cite{metzler1999},
\begin{align}\label{fde}
\frac{\partial}{\partial t}p_{\alpha}(x,t)
=\D K_{\alpha}\Delta p_{\alpha}(x,t),\quad t>0,\,x\in\R^{d}\backslash U,
\end{align}
with absorbing conditions on the target $U\subset \R^{d}$,
\begin{align*}
p_{\alpha}=0,\quad x\in U.
\end{align*}
In \eqref{fde}, $t>0$ is the time elapsed since the last reset, $K_{\alpha}>0$ is the generalized diffusivity (with dimensions $(\text{length})^{2}(\text{time})^{-\alpha}$), and $\D$ is the Riemann-Liouville fractional derivative \cite{samko1993},
\begin{align*}
\D f(t)
=\frac{1}{\Gamma(\alpha)}\frac{\dd}{\dd t}\int_{0}^{t}\frac{f(s)}{(t-s)^{1-\alpha}}\,\dd s,
\end{align*}
where $\Gamma(\alpha)$  is the Gamma function. 

For a given initial distribution of the searcher and a given target, let $\tau_{\alpha}$ denote the FPT without stochastic resetting with corresponding survival probability,
\begin{align}\label{Sint}
S_{\alpha}(t)
:=\P(\tau_{\alpha}>t)
=\int_{\R^{d}\backslash U} p_{\alpha}(x,t)\,\dd x.
\end{align}
Now, the Laplace transform of the solution $p_{\alpha}$ to the fractional equation \eqref{fde} with $\alpha\in(0,1)$ is related to the solution $p_{1}$ to the integer order version of \eqref{fde} (i.e.\ \eqref{fde} with $\alpha=1$) via the relation \cite{lawley2020sr2}
\begin{align}\label{sr2rela}
\widetilde{p}_{\alpha}(x,s)
=s^{\alpha-1}\widetilde{p}_{1}(x,s^{\alpha}),\quad \alpha\in(0,1],
\end{align}
where $\widetilde{f}(s):=\int_{0}^{\infty}e^{-st}f(t)\,\dd t$ denotes Laplace transform. Hence, integrating \eqref{sr2rela} over space and using \eqref{Sint} yields
\begin{align}\label{relaalpha}
\widetilde{S}_{\alpha}(s)
=s^{\alpha-1}\widetilde{S}_{1}(s^{\alpha}),\quad \alpha\in(0,1].
\end{align}
In words, \eqref{relaalpha} yields the survival probability in Laplace space for subdiffusion, $\widetilde{S}_{\alpha}$, from the survival probability in Laplace space for normal diffusion, $\widetilde{S}_{1}$.

Using \eqref{relaalpha}, we now consider the three scenarios analyzed in section~\ref{diff}, but for subdiffusion. Specifically, we consider (i) $d=1$ and the target is a single point, (ii) $d=3$ and the target is a sphere of radius $a>0$, and (iii) $d=3$ and the target is the exterior of a sphere centered at the starting and resetting position. Using Theorem~\ref{thm4}, we obtain that the asymptotic forms $p_{0}$ of the probability of a successful search for these three scenarios are given respectively by
\begin{align*}
p_{0}
=e^{-\sqrt{r^{\alpha}L^{2}/K_{\alpha}}},\quad
p_{0}
=\frac{1}{1+L/a}e^{-\sqrt{r^{\alpha}L^{2}/K_{\alpha}}},\quad
p_{0}
=\sqrt{\frac{4r^{\alpha}L^{2}}{K_{\alpha}}}e^{-\sqrt{r^{\alpha}L^{2}/K_{\alpha}}}.
\end{align*}
In Figure~\ref{figrtp}b, we plot the Kolmogorov-Smirnov distance as in \eqref{ks} for these examples as the dimensionless resetting rate $\sqrt{r(L^{2}/K_{\alpha})^{1/\alpha}}$ increases for a few different choices of the subdiffusive exponent $\alpha\in(0,1)$. Scenarios (i), (ii), and (iii) correspond respectively to the red, blue, and green curves. The values $\alpha=0.5$, $\alpha=0.7$, and $\alpha=0.9$ correspond respectively to the solid, dashed, and dot dashed curves.

\subsection{Non-exponential resetting}\label{nonexp}

In the examples above, the resetting times were exponentially distributed with rate $r$. We now illustrate our results for other choices of the resetting time distribution. In particular, we define the resetting time via $\sigma=Y/r$ for the following three choices of the survival probability of $Y$,
\begin{align}
S_{Y}(y)
&=\begin{cases}
1 & \text{if }y<1,\\
0 & \text{if }y\ge1.
\end{cases}\quad\text{(sharp reset)},\label{nonexp1}\\
S_{Y}(y)
&=1-y/2,\quad y\in[0,2]\quad\text{(uniform reset)},\label{nonexp2}\\
S_{Y}(y)
&=(1+2y)e^{-2y},\quad y\ge0\quad\text{(gamma reset)}.\label{nonexp3}
\end{align}
The choice in \eqref{nonexp1} yields resetting times which are $\sigma=1/r$ with probability one (so-called sharp reset or sharp restart \cite{pal2017}). The choice in \eqref{nonexp2} yields resetting times $\sigma$ which are uniformly distributed on the interval $[0,2r]$. The choice in \eqref{nonexp3} yields resetting times $\sigma$ which have a gamma distribution with shape parameter $2$ and scale parameter $1/(2r)$.

We consider these three choices of the resetting time distribution for the case of diffusive search on the half-line (scenario (i) in section~\ref{diff} above). In Figure~\ref{fignonexp}, we plot the Kolmogorov-Smirnov distance as in \eqref{ks} (with $p_{0}=p$) as the dimensionless resetting rate $\sqrt{rL^{2}/D}$ increases for each of the three resetting time distributions in \eqref{nonexp1}-\eqref{nonexp3}. The distributions of $T$ for each marker in this plot are computed from $10^{7}$ realizations of $T$. The computation of $p$ for each of these examples is given in the Appendix.

\begin{figure}
  \centering
             \includegraphics[width=0.495\textwidth]{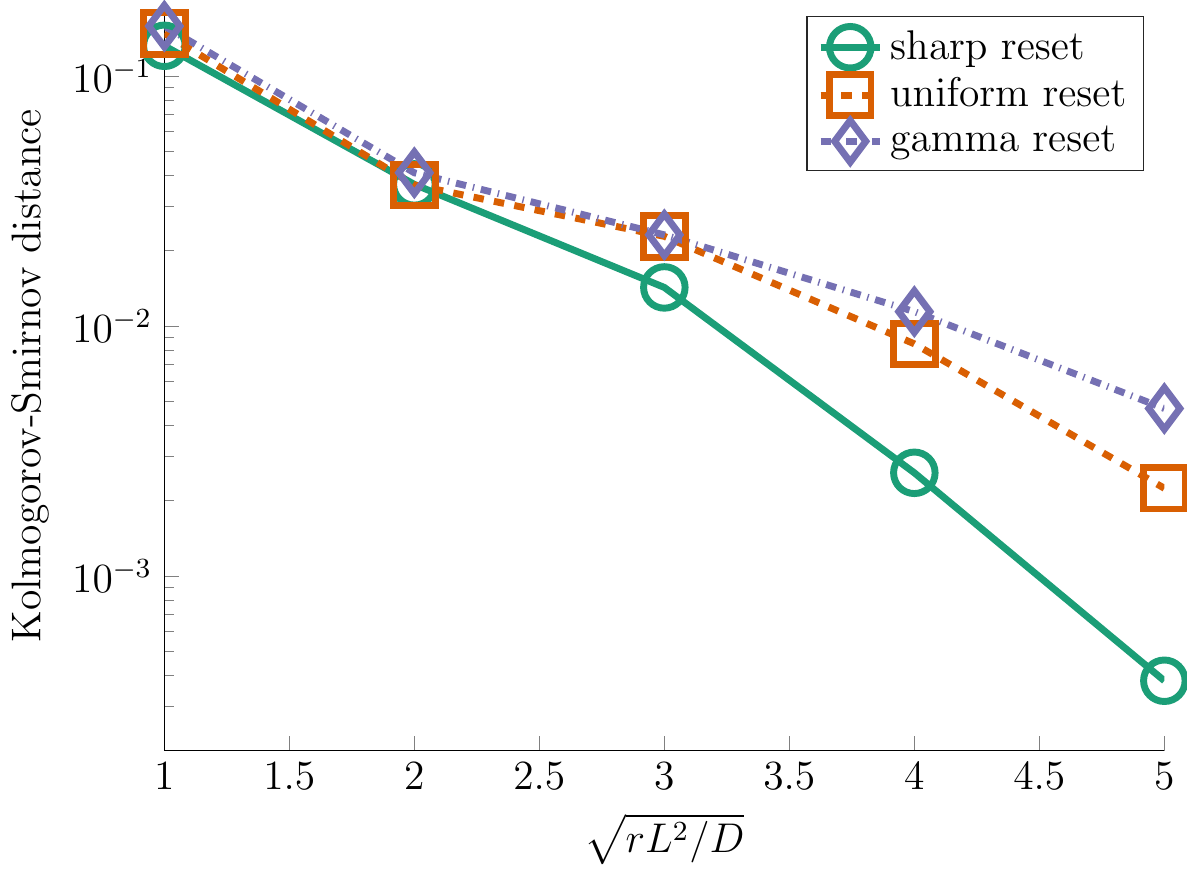}
 \caption{Non-exponentially distributed resetting times. See section~\ref{nonexp} for details.}
 \label{fignonexp}
\end{figure}

\section{Discussion}
In this work, we studied FPTs for stochastic search processes in the limit of frequent stochastic resetting. We determined approximations for the full probability distribution and moments of the FPT, which are exact in the frequent resetting limit. While we generally focused on the case that the resetting occurs at exponentially distributed times with rate $r>0$, we proved our results for much more general resetting time distributions. These results depend only on the short-time behavior of the search process without resetting and are thus immediately tractable in many settings. In particular, much of the details about the particular search process and geometry are irrelevant for frequent resetting, as illustrated in section~\ref{examples}. The relevant information about the search process is encoded into the quantity $p$, which is the probability of a successful search prior to resetting. We computed approximations $p_{0}\approx p$ for a variety of search processes for frequent exponential (Poissonian) resetting. By considering several specific examples with numerical solutions, we found that the error in these approximations often decays rapidly with the resetting rate $r$. These results show that some resetting search processes for which computing the exact distribution and statistics of the FPT is intractable may be well-approximated by simple asymptotic formulae.


\section{Appendix}

\subsection{Proofs of results in section~\ref{mainsection}}
We begin by proving the main results on distribution and moment convergence of the FPT with stochastic resetting. These proofs make use of subsidiary results, which we formalize as lemmas and whose proofs follow.
\begin{proof}[Proof of Theorem~\ref{main}]
By Lemma~\ref{lemmap0}, it is enough to prove the theorem for $p_{0}=p$. Fix $z\in(-\delta,\delta)$ with $\delta$ as in \eqref{ymgf}. Since $\sigma^-$ and $\tau^-$ are independent, the moment generating function of $prT$ can be expanded to
\begin{align} \label{eq:p1a}
    \mathbb{E}[e^{zpr T}] = \mathbb{E}\Bigg[\sum_{k=0}^{\infty} e^{zpr\sum_{n=1}^{k} \sigma_n^-} \mathbbm{1}_{R=k} \Bigg] \mathbb{E}[e^{zpr \tau^-}],
\end{align}
where $\mathbbm{1}_A$ denotes the indicator function on an event $A$, so $\mathbb{E}(\mathbbm{1}_A)= \mathbb{P}(A)$. Moreover, since $R$ is geometrically distributed with probability of success $p$ and $\{\sigma^-_n\}_{n\geq 1}$ are independent and identically distributed,
\begin{align} \label{eq:p1b}
    \mathbb{E}\Bigg[\sum_{k=0}^{\infty} e^{zpr\sum_{n=1}^k \sigma_n^-} \mathbbm{1}_{R=k} \Bigg] = \sum_{k=0}^{\infty} \big(\mathbb{E}[e^{zp
    r\sigma^-}]\big)^k(1-p)^kp.
\end{align}
Substituting (\ref{eq:p1b}) into (\ref{eq:p1a}), which is now in the form of a geometric series,
\begin{align} \label{eq:p1c}
    \mathbb{E}[e^{zpr T}] = \frac{p \mathbb{E}[e^{zpr \tau^-}]}{1 - (1-p)\mathbb{E}[e^{zpr\sigma^-}]}.
\end{align}
Using Lemma \ref{lem6} and Lemma \ref{lem7} when taking the limit as $r\to\infty$ of (\ref{eq:p1c}) yields
\begin{align} \label{eq:p1d}
    \lim_{r\to\infty} \mathbb{E}[e^{zpr T}] = \frac{1}{1-z}.
\end{align}
The right-hand side of (\ref{eq:p1d}) is the moment generating function of an exponential random variable with unit rate. By L\'evy's continuity theorem, convergence of moment generating functions implies convergence in distribution, and so the proof of \eqref{main1} in Theorem~\ref{main} is complete.

Now, since $\sigma^-$ and $\tau^-$ are independent, the second moment of $T$ can be written as
\begin{align} \label{eq:p2a}
    \mathbb{E}[T^2] = \mathbb{E}\Bigg[\Big(\sum_{n=1}^R \sigma_n^- \Big)^2\Bigg] + 2\mathbb{E}[\tau^-]\mathbb{E}\Bigg[\sum_{n=1}^R \sigma_n^-\Bigg] + \mathbb{E}[(\tau^-)^2].
\end{align}
Expanding the first term in \eqref{eq:p2a} yields
\begin{align} \label{eq:p2b}
    \mathbb{E}\Bigg[\Big(\sum_{n=1}^R \sigma_n^-\Big)^2\Bigg] &= \mathbb{E}\Bigg[ \sum_{n=1}^R (\sigma_n^-)^2\Bigg] + \mathbb{E}\Bigg[ \sum_{i=1}^R \sum_{\substack{j=1\\j\neq i}}^R \sigma^-_i \sigma^-_j\Bigg].
\end{align}
Since $R$ and $\sigma^-$ are independent, we apply Wald's Theorem to expand \eqref{eq:p2b},
\begin{align} \label{eq:p2b2}
    \mathbb{E}\Bigg[\Big(\sum_{n=1}^R \sigma_n^-\Big)^2\Bigg] = \mathbb{E}[R]\mathbb{E} [(\sigma^-)^2] + \mathbb{E}[R(R-1)] (\mathbb{E}[\sigma^-])^2. 
\end{align}
By Lemma~\ref{momentsequiv} and the result from \eqref{defR} that
\begin{align*}
\mathbb{E}[R] = \frac{1-p}{p},\quad \mathbb{E}[R^2] = \frac{(2-p)(1-p)}{p^2},
\end{align*}
we find
\begin{align*}
    \lim_{r\to\infty} (pr)^2\mathbb{E}\Bigg[\Big(\sum_{n=1}^R \sigma_n^-\Big)^2\Bigg] = 2.
\end{align*}
To address the other terms in (\ref{eq:p2a}), we apply the Cauchy-Schwarz inequality,
\begin{align} \label{eq:p2d}
    \mathbb{E}[(\tau^-)^m] = \frac{\mathbb{E}[\tau^m \mathbbm{1}_{\tau<\sigma}]}{p} \leq \frac{\mathbb{E}[\sigma^m \mathbbm{1}_{\tau<\sigma}]}{p} \leq \frac{\sqrt{\mathbb{E}[\sigma^{2m}]} \sqrt{\mathbb{E}[(\mathbbm{1}_{\tau<\sigma})^2]}}{p} = \frac{\sqrt{\E[Y^{2m}]}}{r^m\sqrt{p}}.
\end{align}
From this, we conclude
\begin{align} \label{eq:p2e}
    \lim_{r\to\infty} (pr)^2 \mathbb{E}[\tau^-] \mathbb{E}\Bigg[\sum_{n=1}^R \sigma_n^-\Bigg] = \lim_{r\to\infty} (pr)^2 \mathbb{E}[(\tau^-)^2] = 0.
\end{align}
Combining (\ref{eq:p2d}) with (\ref{eq:p2e}) and that $p\to 0$ as $r\to\infty$ by Lemma \ref{pvanish}, we conclude $\lim_{r\to\infty} \mathbb{E}[(pr T)^2] =2$. This implies $\{prT\}_{r>0}$ is uniformly integrable for sufficiently large $r$ (see, for example, equation~(3.18) in \cite{billingsley2013}). Combining uniform integrability with the convergence in distribution in \eqref{main1} in  Theorem~\ref{main} proves the convergence of moments in \eqref{main2} (see, for example, Theorem~3.5 in \cite{billingsley2013}). \end{proof}

The next lemma allows us to prove Theorem~\ref{main} with $p_{0}=p$.

\begin{lemma}\label{lemmap0}
Let $\{X_{n}\}_{n\ge1}$ be any sequence of non-negative random variables and let $\{a_{n}\}_{n\ge1}$ be any sequence of real numbers. If 
\begin{align}\label{assumpcd}
a_{n}X_{n}\to_{\textup{dist}}\textup{Exponential}(1)\quad\text{as }n\to\infty,
\end{align}
and $\lim_{n\to\infty}a_{n}/b_{n}=1$, then 
\begin{align*}
b_{n}X_{n}\to_{\textup{dist}}\textup{Exponential}(1)\quad\text{as }n\to\infty.
\end{align*}
\end{lemma}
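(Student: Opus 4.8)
The plan is to treat this as the elementary special case of Slutsky's theorem in which the multiplier converges to a nonrandom constant. Writing $c_n := b_n/a_n$, one has $b_n X_n = c_n\,(a_n X_n)$ with $c_n \to 1$ by hypothesis, while $a_n X_n \to_{\textup{dist}} E$ where $E \sim \textup{Exponential}(1)$; since the scaling tends to the constant $1$, the product should converge in distribution to $1\cdot E = E$. Rather than quote Slutsky, I would give a short self-contained argument at the level of survival functions, which sidesteps any issue with signs and makes the role of the constant limit transparent.

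First I would dispose of signs. Because each $X_n \ge 0$ and $a_n X_n$ converges in distribution to a law supported on $[0,\infty)$, we must have $a_n > 0$ for all sufficiently large $n$ (otherwise $\mathbb{P}(a_n X_n > x) = 0$ for $x>0$ along a subsequence, contradicting convergence to $e^{-x}>0$). Since $a_n/b_n \to 1 > 0$, it follows that $b_n > 0$ and $c_n = b_n/a_n \to 1$ for all large $n$. As all these statements concern the limit $n\to\infty$, I may assume $a_n, b_n, c_n > 0$ throughout. Because the survival function $x \mapsto \mathbb{P}(E>x)$ equals $e^{-x}$ on $[0,\infty)$, equals $1$ on $(-\infty,0)$, and is continuous on all of $\mathbb{R}$, it suffices to show $\mathbb{P}(b_n X_n > x) \to e^{-x}$ for each $x \ge 0$; the case $x<0$ is immediate since $b_n X_n \ge 0$.

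The key step is a sandwich. Let $S_n(y) := \mathbb{P}(a_n X_n > y)$, so that \eqref{assumpcd} gives $S_n(y) \to e^{-y}$ at every $y$ (every point is a continuity point of the limit). Using $b_n > 0$, the event $\{b_n X_n > x\}$ equals $\{a_n X_n > c_n^{-1}x\}$, and since $c_n^{-1} \to 1$ we have $c_n^{-1}x \to x$. Fixing $x>0$ and $0<\delta<x$, monotonicity of $S_n$ gives, for all $n$ large enough that $|c_n^{-1}x - x| < \delta$,
\begin{align*}
S_n(x+\delta) \;\le\; \mathbb{P}(b_n X_n > x) = S_n(c_n^{-1}x) \;\le\; S_n(x-\delta).
\end{align*}
Passing to $\liminf$ and $\limsup$ and using pointwise convergence at the points $x\pm\delta > 0$ yields $e^{-(x+\delta)} \le \liminf_n \mathbb{P}(b_n X_n > x) \le \limsup_n \mathbb{P}(b_n X_n > x) \le e^{-(x-\delta)}$, and letting $\delta \to 0^+$ forces the limit to be $e^{-x}$. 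The boundary case $x=0$ then follows from $\mathbb{P}(b_n X_n > 0) \ge \mathbb{P}(b_n X_n > x)$ for every $x>0$: taking $\liminf_n$ and then $x\downarrow 0$ gives $\liminf_n \mathbb{P}(b_n X_n>0) \ge 1$, hence the limit is $1 = e^{0}$.

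The only genuine subtlety — the main obstacle — is that convergence in distribution is not preserved under multiplication in general, so one cannot simply rescale the limiting exponential by $\lim_n c_n$ without justification. The resolution is precisely that the multiplier tends to a deterministic constant rather than a nondegenerate random variable; the sandwich makes this rigorous by shifting the vanishing perturbation $c_n^{-1}x - x$ onto the argument of the monotone survival function $S_n$ and then invoking the hypothesis only at continuity points of the everywhere-continuous exponential law.
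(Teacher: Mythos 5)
Your proof is correct and takes essentially the same approach as the paper's: both are sandwich arguments that exploit monotonicity of the distribution function (you use survival functions with an additive perturbation $x\pm\delta$, the paper uses CDFs with a multiplicative perturbation $(1\pm\eta)x$), pointwise convergence of the laws of $a_nX_n$ at the perturbed points, and continuity of the exponential limit. Your explicit handling of the signs of $a_n,b_n$ and of the boundary point $x=0$ is a minor refinement of details the paper leaves implicit, not a different method.
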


\begin{proof}[Proof of Lemma~\ref{lemmap0}]
Fix $x\ge0$ and let $\eps>0$. Since $F(x):=1-e^{-x}$ is continuous, there exists $\eta>0$ so that $|F(x)-F(y)|\le\eps$ for all $y\in\R$ such that $|x-y|\le\eta x$. Since $\lim_{n\to\infty}a_{n}/b_{n}=1$, there exists $N_{1}\ge1$ such that $1-\eta\le a_{n}/b_{n}\le1+\eta$ for all $n\ge N_{1}$. Let $F_{n}(z):=\P(X_{n}\le z)$ for $z\in\R$. By \eqref{assumpcd}, there exists $N_{2}\ge1$ so that $|F_{n}((1\pm\eta)x/a_{n})-F((1\pm\eta)x)|\le\eps$ for all $n\ge N_{2}$.

Since $F_{n}$ is nondecreasing, we have that for $n\ge\max\{N_{1},N_{2}\}$,
\begin{align*}
\P(b_{n}X_{n}\le x)-F(x)
&=F_{n}((a_{n}/b_{n})x/a_{n})-F(x)\\
&\le F_{n}((1+\eta)x/a_{n})-F(x)\\
&\le\eps+F((1+\eta)x)-F(x)\le2\eps.
\end{align*}
Again using that $F_{n}$ is nondecreasing, we similarly have that for $n\ge\max\{N_{1},N_{2}\}$,
\begin{align*}
\P(b_{n}X_{n}\le x)-F(x)
&=F_{n}((a_{n}/b_{n})x/a_{n})-F(x)\\
&\ge F_{n}((1-\eta)x/a_{n})-F(x)\\
&\ge-\eps+F((1-\eta)x)-F(x)\ge-2\eps.
\end{align*}
Since $\eps>0$ is arbitrary, the proof is complete.
\end{proof}

The following lemma gives the intuitive result that $\sigma^{-}$ and $\sigma$ have equivalent moments for large $r$.

\begin{lemma} \label{momentsequiv}
Under the assumptions of section~\ref{mainsection}, we have
\begin{align*}
    \mathbb{E}[(\sigma^-)^m] \sim \mathbb{E}[\sigma^m]\quad \text{as }r\to\infty.
\end{align*}
\end{lemma}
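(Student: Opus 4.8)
The plan is to express $\sigma^-$ directly as $\sigma$ conditioned on the ``unsuccessful'' event $\{\sigma<\tau\}$ and to show that this conditioning perturbs the $m$th moment only negligibly as $r\to\infty$. First I would use the definition of $\sigma^-$ through its survival probability in \eqref{eq:Ssigm} to write the $m$th moment as the conditional moment
\begin{align*}
\E[(\sigma^-)^m]=\E[\sigma^m\mid\sigma<\tau]=\frac{\E[\sigma^m\mathbbm{1}_{\sigma<\tau}]}{\P(\sigma<\tau)}=\frac{\E[\sigma^m\mathbbm{1}_{\sigma<\tau}]}{1-p},
\end{align*}
using $\P(\sigma<\tau)=1-p$. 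Since $p\to0$ as $r\to\infty$ by Lemma~\ref{pvanish}, the denominator tends to $1$, so it is enough to prove $\E[\sigma^m\mathbbm{1}_{\sigma<\tau}]\sim\E[\sigma^m]$.

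Next I would decompose $\E[\sigma^m]=\E[\sigma^m\mathbbm{1}_{\sigma<\tau}]+\E[\sigma^m\mathbbm{1}_{\tau\le\sigma}]$ and argue that the second (``successful-search'') term is asymptotically negligible relative to $\E[\sigma^m]$. Because $\sigma=Y/r$, the moment generating function assumption \eqref{ymgf} guarantees all moments of $Y$ are finite, so $\E[\sigma^m]=\E[Y^m]/r^m$ with $\E[Y^m]\in(0,\infty)$. Applying the Cauchy-Schwarz inequality exactly as in \eqref{eq:p2d},
\begin{align*}
\E[\sigma^m\mathbbm{1}_{\tau\le\sigma}]\le\sqrt{\E[\sigma^{2m}]}\,\sqrt{\P(\tau\le\sigma)}=\frac{\sqrt{\E[Y^{2m}]}}{r^m}\sqrt{p},
\end{align*}
where $\P(\tau\le\sigma)=p$. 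Dividing by $\E[\sigma^m]=\E[Y^m]/r^m$ cancels the factor $r^{-m}$ and leaves the bound $\sqrt{\E[Y^{2m}]}\sqrt{p}/\E[Y^m]$, which vanishes as $r\to\infty$ because $p\to0$. Hence $\E[\sigma^m\mathbbm{1}_{\tau\le\sigma}]=o(\E[\sigma^m])$, so $\E[\sigma^m\mathbbm{1}_{\sigma<\tau}]\sim\E[\sigma^m]$, completing the argument.

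The main obstacle—indeed essentially the only delicate point—is controlling the discarded term $\E[\sigma^m\mathbbm{1}_{\tau\le\sigma}]$, whose decay must beat the overall $r^{-m}$ scaling of $\E[\sigma^m]$. This is precisely where the finiteness of the $2m$th moment of $Y$ (supplied by \eqref{ymgf}) and the vanishing of $p$ (Lemma~\ref{pvanish}) combine through Cauchy-Schwarz to furnish the extra $\sqrt{p}\to0$ factor. A minor technical caveat is the boundary event $\{\tau=\sigma\}$, which the framework of section~\ref{mainsection} implicitly takes to have probability zero, so that $\P(\sigma<\tau)=1-p$ and $\P(\tau\le\sigma)=p$ hold exactly.
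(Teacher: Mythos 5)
Your proof is correct, and its core estimate runs along a genuinely different line from the paper's. Both arguments rest on the same identity $\E[(\sigma^-)^m]=\E[\sigma^m\mathbbm{1}_{\sigma<\tau}]/(1-p)$: the paper derives it by manipulating survival functions in \eqref{eq:p3a}--\eqref{eq:p4f}, while you invoke the standard conditional-moment identity, which is the same content. Where you diverge is in showing that the numerator is asymptotic to $\E[\sigma^m]$. The paper lower-bounds $S_{\tau}(t)\ge 1-\eps$ on a short interval $(0,\eta)$ (using right-continuity of $S_\tau$ and $F_{\tau}(0)=0$), truncates to obtain $(1-\eps)r^{-m}\E[Y^{m}\mathbbm{1}_{Y<r\eta}]$, and finishes with the monotone convergence theorem; you instead control the discarded piece $\E[\sigma^m\mathbbm{1}_{\tau\le\sigma}]$ by Cauchy--Schwarz (the same device the paper uses for $\tau^-$ in \eqref{eq:p2d}), obtaining a relative error of order $\sqrt{p}$, and then invoke Lemma~\ref{pvanish}. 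Your route is shorter and buys an explicit rate, $\E[(\sigma^-)^m]/\E[\sigma^m]=1+O(\sqrt{p})$, where the paper's argument gives none; its only extra cost is finiteness of $\E[Y^{2m}]$, which is free under \eqref{ymgf}, whereas the paper's truncation argument needs only $\E[Y^{m}]<\infty$ and so would survive under weaker moment assumptions on $Y$. Finally, your closing caveat about the tie event $\{\tau=\sigma\}$ is apt rather than a gap: the paper makes the same implicit assumption when it equates $\int_0^\infty S_{\tau}(t)\,\dd F_{\sigma}(t)$ with $1-p$ in \eqref{eq:p4c}.
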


\begin{proof}[Proof of Lemma~\ref{momentsequiv}]

By definition of conditional probability,
\begin{align} \label{eq:p3a}
    S_{\sigma^-}(t) = \frac{\mathbb{P}(t< \sigma < \tau)}{\mathbb{P}(\sigma < \tau)}.
\end{align}
Since $\sigma$ and $\tau$ are independent, the denominator of \eqref{eq:p3a} is
\begin{align} \label{eq:p3b}
    \mathbb{P}(\sigma < \tau) = \mathbb{E}[S_{\tau}(\sigma)] = \int_0^{\infty} S_{\tau}(s)\,\dd F_{\sigma}(s).
\end{align}
Similarly, the numerator of \eqref{eq:p3a} is
\begin{align} \label{eq:p3c}
    \mathbb{P}(t< \sigma < \tau) = \mathbb{E}[S_{\tau}(\sigma)\mathbbm{1}_{\sigma>t}] = \int_t^{\infty} S_{\tau}(s)\,\dd F_{\sigma}(s)
\end{align}
Substituting the expressions in (\ref{eq:p3b}) and (\ref{eq:p3c}) into (\ref{eq:p3a}),
\begin{align} \label{eq:p3d}
    S_{\sigma^-}(t) = \frac{\int_t^{\infty} S_{\tau}(s)\,\dd F_{\sigma}(s)}{\int_0^{\infty} S_{\tau}(s)\,\dd F_{\sigma}(s)}.
\end{align}
By \eqref{eq:p3d},
\begin{align} \label{eq:p4a}
    \int_0^{\infty} mt^{m-1}S_{\sigma^-}(t)\,\text{d}t = \frac{\int_0^{\infty} \int_t^{\infty} mt^{m-1} S_{\tau}(s)\,\dd F_{\sigma}(s)\,\text{d}t}{\int_0^{\infty} S_{\tau}(s)\,\dd F_{\sigma}(s)}.
\end{align}
Given the non-negativity of the integrand in the numerator of (\ref{eq:p4a}), Tonelli's theorem implies
\begin{align} \label{eq:p4b}
    \int_0^{\infty} \int_t^{\infty} mt^{m-1} S_{\tau}(s)\,\dd F_{\sigma}(s)\,\text{d}t 
    &= \int_0^{\infty} s^m S_{\tau}(s)re^{-rs}\,\dd F_{\sigma}(s).
\end{align}
Substituting (\ref{eq:p4b}) into (\ref{eq:p4a}) yields
\begin{align} \label{eq:p4c}
    \int_0^{\infty} mt^{m-1}S_{\sigma^-}(t)\,\text{d}t 
    = \frac{\int_0^{\infty} t^m S_{\tau}(t)\,\dd F_{\sigma}(t)}{\int_0^{\infty} S_{\tau}(t)\,\dd F_{\sigma}(t)}
    =\frac{\int_0^{\infty} t^m S_{\tau}(t)\,\dd F_{\sigma}(t)}{1-p}.
\end{align}
Changing variables $t=x^{1/m}$ yields
\begin{align} \label{eq:p4d}
    \mathbb{E}[(\sigma^-)^m] = \int_0^{\infty} \mathbb{P}(\sigma^- > x^{1/m})\,\text{d}x = \int_0^{\infty} mt^{m-1}S_{\sigma^-}(t)\,\text{d}t,
\end{align}
and thus \eqref{eq:p4c} implies
\begin{align} \label{eq:p4f}
    \mathbb{E}[(\sigma^-)^m] &= \frac{\int_0^{\infty} t^m S_{\tau}(t)\,\dd F_{\sigma}(t)}{1-p}.
\end{align}

Let $\eps>0$. Since $\lim_{r\to\infty}p=0$ by Lemma~\ref{pvanish}, we can take $r$ sufficiently large so that $1/(1-p)\le1+\eps$. Hence, \eqref{eq:p4f} implies that for sufficiently large $r$,
\begin{align*}
\mathbb{E}[(\sigma^-)^m]
\le(1+\eps)\int_0^{\infty} t^m S_{\tau}(t)\,\dd F_{\sigma}(t)
\le(1+\eps)\int_0^{\infty} t^m \,\dd F_{\sigma}(t)
=(1+\eps)\E[\sigma^{m}],
\end{align*}
since $S_{\tau}(t)\le1$ for all $t$. Therefore,
\begin{align*}
\limsup_{r\to\infty} \frac{\mathbb{E}[(\sigma^-)^m]}{\E[\sigma^{m}]}
    \leq 1+\eps.
\end{align*}

Now we consider the limit infimum. The denominator of (\ref{eq:p4f}) is clearly bounded above by unity. By right-continuity of $S_{\tau}(t)$ and the assumption in \eqref{Ftau0} that $F_{\tau}(0)=0$, there exists $\eta>0$ such that for $t\in(0,\eta)$, $S_{\tau}(t) \geq 1 - \varepsilon.$ Thus,
\begin{align} \label{eq:p4g}
    \int_0^{\infty} t^m S_{\tau}(t)\,\dd F_{\sigma}(t)
    &\geq \int_0^{\eta} t^m S_{\tau}(t)\,\dd F_{\sigma}(t) \nonumber\\
    &\geq (1-\varepsilon) \int_0^{\eta} t^m \,\dd F_{\sigma}(t) \nonumber\\
    &\ge (1-\eps)\E[\sigma^{m}\mathbbm{1}_{\sigma<\eta}]
    = (1-\eps)r^{-m}\E[Y^{m}\mathbbm{1}_{Y<r\eta}],
\end{align}
where $\mathbbm{1}_A$ denotes the indicator function on an event $A$. Hence, using $\E[\sigma^{m}]=r^{-m}\E[Y^{m}]$ and the monotone convergence theorem we obtain
\begin{align*} 
    \liminf_{r\to\infty} \frac{\mathbb{E}[(\sigma^-)^m]}{\E[\sigma^{m}]} \geq 1-\varepsilon.
\end{align*}
Since $\varepsilon\in(0,1)$ is arbitrary, the proof is complete.
\end{proof}

We ultimately use convergence of moment generating functions to conclude Theorem~\ref{main}. Hence, we use Lemma \ref{pvanish} below to obtain the desired large $r$ behavior of the moment generating functions for $\sigma^-$ and $\tau^-$.

\begin{lemma} \label{pvanish}
Under the assumptions of section~\ref{mainsection}, we have
\begin{align*}
    \lim_{r\to\infty} p = 0.
\end{align*}
\end{lemma}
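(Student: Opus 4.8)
The plan is to rewrite $p$ as an integral of the survival function of $Y$ evaluated at the rescaled argument $rt$, and then to pass the limit $r\to\infty$ inside the integral. Using the definition $\sigma=Y/r$ from \eqref{sigmadef}, for each $t>0$ we have $S_{\sigma}(t)=\P(\sigma>t)=\P(Y>rt)=S_{Y}(rt)$, so that \eqref{eq:p} becomes
\[
p=\int_{0}^{\infty}S_{Y}(rt)\,\dd F_{\tau}(t).
\]
The entire argument now reduces to understanding the behavior of the integrand $S_{Y}(rt)$ as $r\to\infty$.

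Next I would establish pointwise convergence of the integrand. Fix $t>0$. Then $rt\to\infty$ as $r\to\infty$, and since $Y$ is a proper (almost surely finite) random variable — which is guaranteed by the finite moment generating function assumption \eqref{ymgf} — we have $S_{Y}(y)=\P(Y>y)\to0$ as $y\to\infty$. Hence $S_{Y}(rt)\to0$ for every fixed $t>0$. The sole point at which this fails is $t=0$, where $S_{Y}(r\cdot0)=S_{Y}(0)=1$ for every $r$. This is exactly where the hypothesis $F_{\tau}(0)=0$ in \eqref{Ftau0} enters: it says $F_{\tau}$ places no mass at the origin, so $S_{Y}(rt)\to0$ for $F_{\tau}$-almost every $t$.

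Finally, since $0\le S_{Y}(rt)\le1$ and the constant function $1$ is integrable with respect to $\dd F_{\tau}$ (as $\int_{0}^{\infty}\dd F_{\tau}\le1$), the dominated convergence theorem yields
\[
\lim_{r\to\infty}p=\int_{0}^{\infty}\Big(\lim_{r\to\infty}S_{Y}(rt)\Big)\,\dd F_{\tau}(t)=0.
\]
Alternatively, because $S_{Y}$ is non-increasing, $S_{Y}(rt)$ decreases monotonically in $r$ for each fixed $t$, so one could invoke the monotone convergence theorem for decreasing sequences in place of dominated convergence. Either way, the only genuine subtlety — what I would call the main obstacle — is the failure of the pointwise limit of the integrand at $t=0$, and this is precisely the reason the assumption $F_{\tau}(0)=0$ is imposed in the setup of section~\ref{mainsection}.
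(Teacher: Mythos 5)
Your proof is correct, but it takes a genuinely different route from the paper's. The paper first integrates by parts to rewrite $p=\int_0^\infty F_\tau(t)\,\dd F_\sigma(t)$ and then runs an elementary $\eps$--$\eta$ splitting argument: by right-continuity and $F_\tau(0)=0$ it picks $\eta>0$ with $F_\tau<\eps$ on $(0,\eta)$, bounds the contribution from $(0,\eta)$ by $\eps$, and bounds the rest by $\P(Y\ge r\eta)$, which vanishes as $r\to\infty$. You instead keep the original representation $p=\int_0^\infty S_Y(rt)\,\dd F_\tau(t)$ from \eqref{eq:p} and \eqref{sigmadef} and apply dominated convergence: the integrand tends to zero pointwise for every $t>0$, the exceptional point $t=0$ is $F_\tau$-null precisely by \eqref{Ftau0}, and the constant $1$ dominates and is integrable against the finite measure $\dd F_\tau$ (your monotone-convergence variant also works, since $S_Y(rt)$ is non-increasing in $r$). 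Both arguments hinge on the same two facts --- $\tau$ puts no mass at the origin, and $\sigma=Y/r\to 0$ in probability --- but yours packages them through a standard convergence theorem and is shorter, while the paper's avoids measure-theoretic machinery and is self-contained with explicit bounds. One small remark: you attribute the almost-sure finiteness of $Y$ to the moment generating function assumption \eqref{ymgf}; that is valid, but heavier than necessary, since properness of $Y$ already follows from the unit-mean assumption \eqref{Yunitmean}, which is what the paper cites at the corresponding step.
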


\begin{proof}[Proof of Lemma \ref{pvanish}] Since $\sigma$ is exponentially distributed with rate $r>0$,
\begin{align*}
    p \coloneqq  \mathbb{P}(\tau<\sigma)
    = \int_0^{\infty} S_{\sigma}(t)\,\dd F_{\tau}(t)
    = \int_0^{\infty} F_{\tau}(t)\,\dd F_{\sigma}(t),
\end{align*}
where the second equality follows from integration by parts. Let $\eps\in(0,1)$. Since $F_{\tau}(t)$ is right-continuous and since $F_{\tau}(0)=0$ by assumption~\eqref{Ftau0}, there exists $\eta>0$ such that $F_{\tau}(t)<\varepsilon$ for $t\in(0,\eta)$. Hence,
\begin{align*}
    0 < p \leq \varepsilon\int_0^{\eta} \,\dd F_{\sigma}(t) + \int_{\eta}^{\infty} \,\dd F_{\sigma}(t)
    \le\eps+\P(\sigma\ge\eta)
    =\eps+\P(Y\ge r\eta).
\end{align*}
Since $\lim_{r\to\infty}\P(Y\ge r\eta)=0$ by \eqref{Yunitmean} and since $\varepsilon\in(0,1)$ is arbitrary, the proof is complete.
\end{proof}

The remaining lemmas characterize the large $r$ behavior of individual terms in the moment generating function of $prT$. Altogether, these results make possible the proof of Theorem~\ref{main}.

\begin{lemma} \label{lem6}
Under the assumptions of section \ref{mainsection}, we have that for all $z\in(-\delta,\delta)$ with $\delta>0$ as in \eqref{ymgf},
\begin{align*}
    \mathbb{E}[e^{zpr \sigma^-}] = 1 + zp + o(p)\quad \text{as }r\to\infty.
\end{align*}
\end{lemma}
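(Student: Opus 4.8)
The plan is to reduce $\E[e^{zpr\sigma^-}]$ to an expectation over the fixed random variable $Y$ and then extract the leading order in $p$ by dominated convergence. Since $\sigma^-$ has the law of $\sigma$ conditioned on the event $\{\sigma<\tau\}$ (cf.\ \eqref{eq:Ssigm}), and since $\sigma$ and $\tau$ are independent with $\P(\tau>s)=S_\tau(s)$, I would first write, for any $z\in(-\delta,\delta)$,
\begin{align*}
\E[e^{zpr\sigma^-}]
=\frac{\E[e^{zpr\sigma}S_\tau(\sigma)]}{\E[S_\tau(\sigma)]}
=\frac{\E[e^{zpY}S_\tau(Y/r)]}{1-p},
\end{align*}
where I used $\sigma=Y/r$ (so $zpr\sigma=zpY$) together with the identity $\E[S_\tau(\sigma)]=\int_0^\infty S_\tau(t)\,\dd F_\sigma(t)=1-p$ already recorded in the denominator of \eqref{eq:p4c}. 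This is the same change of variables that renders the subsequent estimates independent of the resetting scale $r$.

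Next I would isolate the linear term. Subtracting $1-p=\E[S_\tau(Y/r)]$ from the numerator gives
\begin{align*}
\E[e^{zpY}S_\tau(Y/r)]-(1-p)
=\E\big[(e^{zpY}-1)S_\tau(Y/r)\big],
\end{align*}
so it suffices to show $\E[(e^{zpY}-1)S_\tau(Y/r)]=zp+o(p)$ as $r\to\infty$, for then
\begin{align*}
\E[e^{zpr\sigma^-}]
=\frac{(1-p)+zp+o(p)}{1-p}
=1+zp+o(p),
\end{align*}
using $p\to0$ (Lemma~\ref{pvanish}) and $1/(1-p)=1+O(p)$. Dividing by $p$, I would prove the equivalent statement $\lim_{r\to\infty}\E\big[\tfrac{e^{zpY}-1}{p}\,S_\tau(Y/r)\big]=z$. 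For each fixed outcome, $p=p(r)\to0$ forces $\tfrac{e^{zpY}-1}{p}\to zY$, while $Y/r\to0$ together with right-continuity of $F_\tau$ and $F_\tau(0)=0$ (so $S_\tau(0^+)=1$) gives $S_\tau(Y/r)\to1$; hence the integrand converges pointwise to $zY$, whose expectation is $z\E[Y]=z$ by \eqref{Yunitmean}.

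The main obstacle is justifying the interchange of limit and expectation, i.e.\ producing a $p$-independent integrable dominating function for $\big|\tfrac{e^{zpY}-1}{p}\big|S_\tau(Y/r)\le\big|\tfrac{e^{zpY}-1}{p}\big|$ (using $S_\tau\le1$). I would treat the two signs of $z$ separately. For $z<0$ the elementary bound $1-e^{-u}\le u$ for $u\ge0$ gives $\big|\tfrac{e^{zpY}-1}{p}\big|\le|z|Y$, which is integrable since $\E[Y]=1$. For $z>0$ I would use $e^{u}-1\le u e^{u}$ to get $\tfrac{e^{zpY}-1}{p}\le zY e^{zpY}$; fixing any $\delta'\in(0,\delta)$ with $\delta$ as in \eqref{ymgf} and taking $r$ large enough that $zp<\delta'$ (possible since $p\to0$), this is bounded by $zY e^{\delta' Y}$, which is integrable because $\E[Y e^{\delta' Y}]<\infty$ follows from finiteness of the moment generating function of $Y$ on $[-\delta,\delta]$ in \eqref{ymgf}. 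The case $z=0$ is trivial. With this domination the dominated convergence theorem yields the displayed limit, completing the argument. The only points needing care are that these dominating functions are genuinely independent of $r$ (secured by the uniform choice $zp<\delta'$) and the tacit assumption, used throughout section~\ref{mainsection}, that $\P(\sigma=\tau)=0$ so that $\P(\sigma<\tau)=1-p$.
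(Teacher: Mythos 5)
Your proof is correct, and it takes a genuinely different route from the paper's. The paper expands the moment generating function as a power series,
\begin{align*}
\mathbb{E}[e^{zpr\sigma^-}]=\sum_{k=0}^{\infty}\frac{(zp)^k}{k!}\frac{\mathbb{E}[(\sigma^-)^k]}{\mathbb{E}[\sigma^k]}\mathbb{E}[Y^k],
\end{align*}
controls each moment ratio uniformly via the identity \eqref{eq:p4f} (which gives $\mathbb{E}[(\sigma^-)^k]\le 2\,\mathbb{E}[\sigma^k]$ for large $r$), invokes Lemma~\ref{momentsequiv} to handle the linear term, and applies dominated convergence on the series index $k$ to kill the $k\ge2$ tail. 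You instead represent the MGF exactly as $\mathbb{E}[e^{zpY}S_\tau(Y/r)]/(1-p)$, subtract the denominator, and apply dominated convergence once on the underlying probability space, with pointwise limit $zY$ (using $p\to0$ from Lemma~\ref{pvanish} and $S_\tau(0^+)=1$ from \eqref{Ftau0}) and dominating function $|z|Y$ or $zYe^{\delta'Y}$ depending on the sign of $z$. Your route is more direct and self-contained: it bypasses the series expansion and Lemma~\ref{momentsequiv} entirely, in effect reproving its $m=1$ case as a byproduct of the pointwise convergence $S_\tau(Y/r)\to1$; the paper's route, by contrast, reuses machinery (\eqref{eq:p4f}, Lemma~\ref{momentsequiv}) that it needs anyway for the moment-convergence half of Theorem~\ref{main}, so the work is shared across proofs. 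Both arguments rest on the same two pillars, $p\to0$ and the MGF condition \eqref{ymgf}, and both use the identification $\mathbb{P}(\sigma<\tau)=1-p$, i.e.\ the tacit assumption $\mathbb{P}(\sigma=\tau)=0$, which the paper also makes implicitly in \eqref{eq:p4c}; you are right to flag it explicitly.
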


\begin{proof}[Proof of Lemma \ref{lem6}]
It follows immediately from \eqref{eq:p4f} that
\begin{align*}
\E[(\sigma^{-})^{m}]\le\frac{\E[\sigma^{m}]}{1-p},
\end{align*}
and since $\lim_{r\to\infty}p=0$ by Lemma~\ref{pvanish}, we have that
\begin{align*}
\E[(\sigma^{-})^{m}]
\le 2\E[\sigma^{m}]\quad\text{for sufficiently large $r$}.
\end{align*}
Now, expanding the exponential function and assuming $z\in(-\delta,\delta)$ for $\delta$ as in \eqref{ymgf} yields
\begin{align}\label{rety}
\mathbb{E}[e^{zpr \sigma^-}]
= \sum_{k=0}^{\infty} \frac{(zpr)^k}{k!} \mathbb{E}[(\sigma^-)^k]
= \sum_{k=0}^{\infty} \frac{(zp)^k}{k!} \frac{\mathbb{E}[(\sigma^-)^k]}{\E[\sigma^{k}]}\E[Y^{k}],
\end{align}
where we have used that $\sigma=Y/r$. To see the validity of exchanging the expectation with the sum in the first equality in \eqref{rety}, first note the bound
\begin{align}\label{bound22}
\begin{split}
\Big|\frac{(zpr)^k}{k!} \E[(\sigma^-)^k]\Big|
&\le\frac{(|z|r)^k}{k!} 2\E[\sigma^{k}]\\
&=\frac{|z|^k}{k!} 2\E[Y^{k}]\quad\text{for sufficiently large $r$}.
\end{split}
\end{align}
Next, Tonelli's theorem implies that
\begin{align}\label{ton22}
\sum_{k=0}^{\infty}\frac{|z|^{k}}{k!}\E[Y^{k}]
=\E\bigg[\sum_{k=0}^{\infty}\frac{|z|^{k}}{k!}Y^{k}\bigg]
=\E[e^{|z|Y}]<\infty,
\end{align}
by the assumption in \eqref{ymgf} that $Y$ has a finite moment generating function in a neighborhood of the origin. Hence, the bound in \eqref{bound22} and the finiteness of the sum in \eqref{ton22} allows us to use the dominated convergence theorem to verify the first equality in \eqref{rety}.

It follows from \eqref{rety} and the assumption in \eqref{Yunitmean} that  $\E[Y]=1$ that 
\begin{align} \label{eq:p6d}
\frac{1}{p}\Big(1 + zp - \mathbb{E}[e^{zpr \sigma^-}]\Big)
&= \frac{1}{p}\Big(1 + zp - \sum_{k=0}^{\infty} \frac{(zp)^k}{k!} \frac{\mathbb{E}[(\sigma^-)^k]}{\E[\sigma^{k}]}\E[Y^{k}]\Big) \nonumber\\
&= z\Big(1 - \frac{\mathbb{E}[\sigma^-]}{\mathbb{E}[\sigma]}\Big) - \sum_{k=2}^{\infty} \frac{p^{k-1}z^{k}}{k!} \frac{\mathbb{E}[(\sigma^-)^k]}{\E[\sigma^{k}]}\E[Y^{k}].
\end{align}
Since $\E[\sigma^{-}]\sim\E[\sigma]$ as $r\to\infty$ by Lemma~\ref{momentsequiv}, it remains to show that the sum in \eqref{eq:p6d} vanishes as $r\to\infty$. Note first that the terms in this sum vanish as $r\to\infty$ since $\lim_{r\to\infty}p=0$ by Lemma~\ref{pvanish}. It thus remains to justify exchanging the large $r$ limit with the sum. First, observe that for sufficiently large $r$ we have the bound
\begin{align*}
\Big|\frac{p^{k-1}z^{k}}{k!} \frac{\mathbb{E}[(\sigma^-)^k]}{\E[\sigma^{k}]}\E[Y^{k}]\Big|
\le\frac{|z|^{k}}{k!} 2\E[Y^{k}],
\end{align*}
and thus again using \eqref{ton22} and the dominated convergence theorem allows us to conclude
\begin{align*}
\lim_{r\to\infty}\sum_{k=2}^{\infty} \frac{p^{k-1}z^{k}}{k!} \frac{\mathbb{E}[(\sigma^-)^k]}{\E[\sigma^{k}]}\E[Y^{k}]
=\sum_{k=2}^{\infty} \lim_{r\to\infty}\frac{p^{k-1}z^{k}}{k!} \frac{\mathbb{E}[(\sigma^-)^k]}{\E[\sigma^{k}]}\E[Y^{k}]
=0,
\end{align*}
which completes the proof.
\end{proof}

\begin{lemma} \label{lem7}
Under the assumptions of section \ref{mainsection}, we have that for all $z\in(-\delta,\delta)$ with $\delta>0$ as in \eqref{ymgf},
\begin{align*}
    \lim_{r\to\infty} \mathbb{E}[e^{zpr \tau^-}] = 1.
\end{align*}
\end{lemma}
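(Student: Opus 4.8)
The plan is to express the moment generating function of $\tau^-$ directly through its conditional definition $S_{\tau^-}(t)=\P(\tau>t\mid\tau<\sigma)$, which gives, for any nonnegative measurable $g$,
\begin{align*}
\E[g(\tau^-)]=\frac{1}{p}\E[g(\tau)\mathbbm{1}_{\tau<\sigma}].
\end{align*}
Taking $g(t)=e^{zprt}$ yields $\E[e^{zpr\tau^-}]=\frac{1}{p}\E[e^{zpr\tau}\mathbbm{1}_{\tau<\sigma}]$. The key observation is that on the event $\{\tau<\sigma\}$ we have $\tau<\sigma=Y/r$ by \eqref{sigmadef}, hence $r\tau<Y$, and therefore $e^{-|z|pY}\le e^{zpr\tau}\le e^{|z|pY}$ pointwise on this event, for either sign of $z$.

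Using this pointwise sandwich inside the expectation gives
\begin{align*}
\frac{1}{p}\E[e^{-|z|pY}\mathbbm{1}_{\tau<\sigma}]\le\E[e^{zpr\tau^-}]\le\frac{1}{p}\E[e^{|z|pY}\mathbbm{1}_{\tau<\sigma}].
\end{align*}
I would then peel off the constant by writing $e^{\pm|z|pY}=1\pm(e^{\pm|z|pY}-1)$; since $\E[\mathbbm{1}_{\tau<\sigma}]=p$ cancels the prefactor $1/p$, both bounds reduce to $1$ plus (respectively minus) an error term of the form $\frac{1}{p}\E[(e^{|z|pY}-1)\mathbbm{1}_{\tau<\sigma}]$ or $\frac{1}{p}\E[(1-e^{-|z|pY})\mathbbm{1}_{\tau<\sigma}]$. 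It then suffices to show each error term vanishes as $r\to\infty$.

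For the error terms I would apply the Cauchy--Schwarz inequality to separate the indicator, e.g.
\begin{align*}
\frac{1}{p}\E[(e^{|z|pY}-1)\mathbbm{1}_{\tau<\sigma}]
\le\frac{1}{p}\sqrt{\E[(e^{|z|pY}-1)^2]}\,\sqrt{\E[\mathbbm{1}_{\tau<\sigma}]}
=\frac{\sqrt{\E[(e^{|z|pY}-1)^2]}}{\sqrt{p}}.
\end{align*}
Using $0\le e^u-1\le ue^u$ (and $0\le 1-e^{-u}\le u$ for the lower bound) together with the finite moment generating function assumption \eqref{ymgf}, one obtains $\E[(e^{|z|pY}-1)^2]\le z^2p^2\E[Y^2e^{2|z|pY}]=O(p^2)$, so the error is $O(\sqrt{p})$ and vanishes because $p\to0$ by Lemma~\ref{pvanish}. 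The sandwich then forces $\E[e^{zpr\tau^-}]\to1$.

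The main technical point is that $Y$ and the event $\{\tau<\sigma\}$ are \emph{not} independent, since $\sigma=Y/r$, so one cannot simply factor $\E[e^{|z|pY}\mathbbm{1}_{\tau<\sigma}]$; retaining the indicator through Cauchy--Schwarz is exactly what produces the compensating $\sqrt{p}$ that beats the $1/p$ prefactor. One must also take $r$ large enough that $2|z|p<\delta$, so that $\E[Y^2e^{2|z|pY}]$ is finite and bounded uniformly in $r$ via \eqref{ymgf}. As an alternative I could expand $e^{zpr\tau^-}=\sum_{k\ge0}\frac{(zpr)^k}{k!}\E[(\tau^-)^k]$ and use the moment bound $\E[(\tau^-)^k]\le\sqrt{\E[Y^{2k}]}/(r^k\sqrt{p})$ established in \eqref{eq:p2d}, controlling the resulting series with $\binom{2k}{k}\le4^k$ and \eqref{ymgf}; this again yields a tail of order $\sqrt{p}$ for $k\ge1$.
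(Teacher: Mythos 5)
Your proof is correct, but it takes a genuinely different route from the paper's. The paper converts the moment generating function assumption \eqref{ymgf} into an exponential tail bound $S_Y(y)\le Ke^{-\delta y}$ via Chebyshev's inequality, uses it to bound $S_{\tau^-}(t)\le\min\{1,\tfrac{K}{p}e^{-r\delta t}\}$, and then observes that $\tau^-$ is stochastically dominated by the explicit random variable $C+(r\delta)^{-1}E$, where $C=(r\delta)^{-1}\ln(K/p)$ and $E$ is a unit-rate exponential; the MGF of this dominating variable is computed in closed form as $e^{zprC}/(1-zp/\delta)$, and the limit follows because $p\ln(1/p)\to 0$ as $p\to0$ (Lemma~\ref{pvanish}). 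You instead stay with the representation $\E[e^{zpr\tau^-}]=\tfrac1p\E[e^{zpr\tau}\mathbbm{1}_{\tau<\sigma}]$, exploit the pointwise inequality $r\tau<Y$ on the event $\{\tau<\sigma\}$ to sandwich the integrand between $e^{-|z|pY}$ and $e^{|z|pY}$, and then use Cauchy--Schwarz to decouple $Y$ from the (dependent) event $\{\tau<\sigma\}$, which converts the dangerous $1/p$ prefactor into an error of order $\sqrt{p}$. Your identification of the dependence between $Y$ and $\{\tau<\sigma\}$ as the crux is exactly right, and your resolution is sound; it is in fact the same Cauchy--Schwarz device the paper itself uses for the moment bound \eqref{eq:p2d} in the proof of Theorem~\ref{main}, here promoted to the level of the MGF. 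As for what each approach buys: the paper's stochastic-domination argument implicitly yields the sharper error rate $O(p\ln(1/p))$, whereas your sandwich gives $O(\sqrt{p})$ --- both amply sufficient for the limit --- while your argument is more self-contained, requiring no construction of a dominating variable, only the elementary inequalities $e^u-1\le ue^u$ and $1-e^{-u}\le u$ together with the uniform bound $\E[Y^2e^{2|z|pY}]\le C_\delta\E[e^{\delta Y}]<\infty$ once $r$ is large enough that $2|z|p\le\delta/2$, a point you correctly flag as needing \eqref{ymgf}.
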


\begin{proof}[Proof of Lemma \ref{lem7}]

Since $Y$ has a finite moment generating function as in \eqref{ymgf}, its survival probability, $S_{Y}(y)=\P(Y>y)$, vanishes no slower than exponentially,
\begin{align*}
S_{Y}(y)
\le Ke^{-{{\delta}} y}\quad \text{for all }y\in\R,
\end{align*}
where $K=\E[e^{\delta Y}]<\infty$. To see this, let $\delta>0$ be as in \eqref{ymgf} and observe that Chebyshev's inequality (see, for example, Theorem 1.6.4 in \cite{durrett2019}) implies that for any $y\in\R$,
\begin{align*}
S_{Y}(y)
=\P(Y>y)
=\P(e^{\delta Y}>e^{\delta y})
\le e^{-\delta y}\E[e^{\delta Y}]<\infty.
\end{align*}
Hence, by definition of conditional probability,
\begin{align*}
    S_{\tau^-}(t)
    = \frac{\mathbb{P}(t < \tau < \sigma)}{\mathbb{P}(\tau < \sigma)}
    \leq \text{min}\Big\{1, \frac{\mathbb{P}(t < \sigma)}{\mathbb{P}(\tau < \sigma)}\Big\}
    &= \text{min}\Big\{1,\frac{1}{p}S_{Y}(rt)\Big\}\\
    &\le\begin{cases}
    \frac{K}{p}e^{-r{{\delta}} t} & \text{if }t\ge C,\\
    1 & \text{if }t<C,
    \end{cases}
\end{align*}
where
\begin{align}\label{Cdef}
C=(r\delta)^{-1}\ln(K/p).
\end{align}
Now, if $E$ is exponentially distributed with unit mean, then
\begin{align*}
\P(C+(r\delta)^{-1}E>t)
    &=\begin{cases}
    \frac{K}{p}e^{-r\delta t} & \text{if }t\ge C,\\
    1 & \text{if }t<C,
    \end{cases}
\end{align*}
and therefore,
\begin{align*}
S_{\tau^{-}}(t)
\le \P(C+(r\delta)^{-1}E>t)\quad\text{for all }t\in\R.
\end{align*}

Hence, for $z\in[0,{{\delta}})$, the non-decreasing nature of $f(x)=e^{zx}$ yields
\begin{align} \label{eq:p7c1}
1
\leq \mathbb{E}[e^{zpr \tau^-}]
\leq \mathbb{E}[e^{zpr(C+(r\delta)^{-1}E)}]
=\frac{e^{zprC}}{1-zp/{{\delta}}}.
\end{align}
For $z\in(-{{\delta}},0)$, the non-increasing nature of $f(x)=e^{-|z|x}$ yields
\begin{align} \label{eq:p7c2}
1
\ge \mathbb{E}[e^{-|z|pr \tau^-}]
\ge \mathbb{E}[e^{-|z|pr(C+(r\delta)^{-1}E)}]
=\frac{e^{zprC}}{1-zp/{{\delta}}}.
\end{align}
Taking $r\to\infty$ in \eqref{eq:p7c1} and \eqref{eq:p7c2} and using the definition of $C$ in \eqref{Cdef} and the fact that $\lim_{r\to\infty}p=0$ by Lemma~\ref{pvanish} completes the proof.
\end{proof}

\subsection{Proofs of results in section~\ref{charsection}}

The proofs of Proposition \ref{proplap} and Theorems \ref{thmlog}-\ref{thm4} below capture the utility of the results in Theorem~\ref{main} given increasingly strong assumptions on the cumulative distribution function of the search process without resetting.

\begin{proposition} \label{proplap}
Assume $C,d>0$ and $b\in\mathbb{R}$. If $\eta>0$, then
\begin{align} \label{eq:proplap}
    \int_0^{\eta} e^{-rt} t^b e^{-C/t^d}\,\textup{d}t \sim {\mu} r^{\beta - 1} \exp(-\gamma r^{d/(d+1)})\quad \textup{as }r\to\infty,
\end{align}
where
\begin{align*}
    {\mu} = \sqrt{\frac{2\pi (Cd)^{\frac{2b+1}{d+1}}}{d+1}}, \quad \beta = \frac{d-2b}{2d+2}, \quad \gamma = \frac{d+1}{d^{d/(d+1)}}C^{\frac{1}{d+1}}.
\end{align*}
\end{proposition}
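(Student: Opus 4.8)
The plan is to evaluate the left-hand side of \eqref{eq:proplap} by Laplace's method, since as $r\to\infty$ the integrand $t^{b}e^{-rt-C/t^{d}}$ concentrates near the maximizer of the phase $\phi(t):=-rt-C/t^{d}$. Differentiating, $\phi'(t)=-r+Cd\,t^{-(d+1)}$ vanishes at the unique interior point
\begin{align*}
t_{*}=(Cd/r)^{1/(d+1)},
\end{align*}
which tends to $0$ as $r\to\infty$ and hence lies in $(0,\eta)$ for all large $r$. A direct computation gives $\phi(t_{*})=-\gamma r^{d/(d+1)}$ and $\phi''(t_{*})=-(d+1)(Cd)^{-1/(d+1)}r^{(d+2)/(d+1)}<0$, so the exponential rate $\exp(-\gamma r^{d/(d+1)})$ in the claim is already explained by the value of the phase at its peak.

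First I would remove the moving peak by the substitution $t=t_{*}s$, which recasts the integral as
\begin{align*}
\int_{0}^{\eta}e^{-rt}t^{b}e^{-C/t^{d}}\,\dd t
=t_{*}^{\,b+1}\int_{0}^{\eta/t_{*}}s^{b}e^{-\Lambda\psi(s)}\,\dd s,
\quad \Lambda:=r^{d/(d+1)},
\end{align*}
where $\psi(s):=as+cs^{-d}$ with $a=(Cd)^{1/(d+1)}$ and $c=C^{1/(d+1)}d^{-d/(d+1)}$. The advantage is that $\psi$ no longer depends on $r$: it is strictly convex on $(0,\infty)$, blows up at both endpoints, and attains its unique minimum at $s=1$, with $\psi(1)=a+c=\gamma$ and $\psi''(1)=(d+1)(Cd)^{1/(d+1)}>0$. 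Thus I have reduced the problem to a standard Laplace integral $\int_{0}^{M_{r}}s^{b}e^{-\Lambda\psi(s)}\,\dd s$ with $\Lambda\to\infty$ and $M_{r}:=\eta/t_{*}\to\infty$, whose minimum sits at the fixed interior point $s=1$.

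Next I would carry out the Laplace approximation on a window $|s-1|\le\delta$: Taylor expand $\psi(s)=\gamma+\tfrac12\psi''(1)(s-1)^{2}+O((s-1)^{3})$ and $s^{b}=1+O(s-1)$, substitute $s-1=w/\sqrt{\Lambda}$, and pass to the limit to obtain the Gaussian contribution $\sqrt{2\pi/(\Lambda\psi''(1))}\,e^{-\gamma\Lambda}$. Assembling this with the prefactor $t_{*}^{\,b+1}=(Cd)^{(b+1)/(d+1)}r^{-(b+1)/(d+1)}$ and simplifying the powers of $r$ and of $Cd$ reproduces exactly the constants in the statement; in particular the power $-(b+1)/(d+1)-d/(2(d+1))$ of $r$ collapses to $\beta-1$, and the algebraic factors combine to $\mu$.

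The main obstacle is the rigorous control of everything outside the peak, i.e.\ showing $\int_{|s-1|\ge\delta}s^{b}e^{-\Lambda\psi(s)}\,\dd s$ is negligible relative to $e^{-\gamma\Lambda}\Lambda^{-1/2}$. Since $\psi$ is strictly convex with $\psi(s)\ge\gamma+\kappa$ for some $\kappa>0$ on $\{|s-1|\ge\delta\}$, this tail is $O\big(e^{-(\gamma+\kappa)\Lambda}\big)$ up to polynomial factors and is therefore exponentially subdominant; the only delicate point is the behavior as $s\to0^{+}$ when $b<0$, where $s^{b}$ is singular, but there the factor $e^{-\Lambda c s^{-d}}$ decays faster than any power of $s$ and dominates. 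One must also note that the upper limit $M_{r}$ grows with $r$ rather than being fixed, so the tail bound has to be uniform in $r$; this is automatic from the lower bound on $\psi$ away from $s=1$. Once these estimates are in place, the Gaussian limit on the central window yields the stated asymptotic equivalence.
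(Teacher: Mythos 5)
Your proposal is correct and follows essentially the same route as the paper's proof: the same rescaling $t = t_* s$ with $t_* = (Cd/r)^{1/(d+1)}$, reducing to a Laplace integral with fixed peak at $s=1$ and large parameter $r^{d/(d+1)}$, then the standard Gaussian approximation. The only difference is that you spell out the tail estimates (the $s\to0^+$ singularity for $b<0$ and the growing upper limit $\eta/t_*$) that the paper leaves implicit in its citation of Laplace's method.
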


\begin{proof}[Proof of Proposition \ref{proplap}] One can verify that the exponential factor in the integrand of \eqref{eq:proplap} is maximized at
\begin{align*}
    t^* \coloneqq \Big(\frac{Cd}{r}\Big)^{1/(d+1)}.
\end{align*}
Let $r>0$ be sufficiently large such that $t^*\in(0,\eta)$. The change of variables $s = t/t^*$ transforms the integral in \eqref{eq:proplap} into
\begin{align} \label{intLPM}
    \int_0^{\eta} &e^{-rt}t^be^{-C/t^d}\,\text{d}t \nonumber\\
    &= \Big(\frac{Cd}{r}\Big)^{\frac{b+1}{d+1}} \int_0^{\eta/t^*} s^b \exp\Big[ r^{\frac{d}{d+1}} \Big( -s(Cd)^{\frac{1}{d+1}} - Cs^{-d} (Cd)^{\frac{-d}{d+1}} \Big) \Big]\,\text{d}s
\end{align}
so that $s=1$ corresponds to the maximum of the exponential. From here, we can apply Laplace's method to \eqref{intLPM},
\begin{align} \label{pf_proplap}
    \int_0^{\eta/t^{*}} f(s) e^{x\phi(s)}\,\text{d}s \sim \frac{\sqrt{2\pi}f(1)e^{x\phi(1)}}{\sqrt{-x\phi''(1)}},\quad \text{as }x\coloneqq r^{d/(d+1)}\to\infty,
\end{align}
where
\begin{align}
    f(s) = s^b,\quad \phi(s) = -s(Cd)^{\frac{1}{d+1}} - Cs^{-d} (Cd)^{\frac{-d}{d+1}}.
\end{align}
Simplifying the right-hand side of \eqref{pf_proplap} completes the proof.
\end{proof}

\begin{proof}[Proof of Theorem~\ref{thmlognormal}]
The proof of Theorem~\ref{thmlognormal} follows from the proof of Theorem~\ref{thmlog} below with $d=1$.
\end{proof}

\begin{proof}[Proof of Theorem~\ref{thmlog}]
Let ${{\eps}}>0$. By the assumption in \eqref{lim_thmlog}, there exists a $\eta>0$ such that 
\begin{align} \label{Ftau_thmlog}
e^{-(C+{{\eps}})/t^d}
\le    F_{\tau}(t) 
\le e^{-(C-{{\eps}})/t^d}\quad \text{for all } t\in(0,\eta).
\end{align}
For any $u,v\in\R\cup\{\infty\}$, define $I_{u,v} \coloneqq \int_u^v e^{-rt}\,\text{d}F_{\tau}(t)$ so that $p=I_{0,\eta}+I_{\eta,\infty}$. Using integration by parts and the upper bound in \eqref{Ftau_thmlog} yields
\begin{align}
    I_{0,\eta} &= e^{-r\eta}F_{\tau}(\eta) + \int_0^{\eta} re^{-rt} F_{\tau}(t)\,\text{d}t\nonumber\\
    &\leq e^{-r\eta}F_{\tau}(\eta) + \int_0^{\eta} re^{-rt}e^{-(C-{{\eps}})/t^d}\,\text{d}t. \label{pf_thmlog}
\end{align}
Similarly, the lower bound in \eqref{Ftau_thmlog} yields
\begin{align*}
\int_0^{\eta} re^{-rt}e^{-(C+{{\eps}})/t^d}\,\text{d}t
\le I_{0,\eta}.
\end{align*}
Now, Proposition \ref{proplap} implies that 
\begin{align*}
\int_0^{\eta} re^{-rt}e^{-(C\pm{{\eps}})/t^d}\,\text{d}t
\sim {\mu}_{\pm\eps}r^{\beta}\exp(-\gamma_{\pm\eps}r^{d/(d+1)})\quad\text{as }r\to\infty,
\end{align*}
where 
\begin{align*}
 {\mu}_{\pm\eps} = \sqrt{\frac{2\pi ((C\pm\eps)d)^{\frac{2b+1}{d+1}}}{d+1}}, \quad \beta = \frac{d}{2d+2}, \quad \gamma_{\pm\eps} = \frac{d+1}{d^{d/(d+1)}}(C\pm\eps)^{\frac{1}{d+1}}.
\end{align*}
It is straightforward to check that $I_{\eta,\infty}=O(re^{-r\eta})$ as $r\to\infty$, and thus
\begin{align*}
\lim_{r\to\infty}\frac{e^{-r\eta}F_{\tau}(\eta)}{\int_0^{\eta} re^{-rt}e^{-(C\pm{{\eps}})/t^d}\,\text{d}t}
=\lim_{r\to\infty}\frac{I_{\eta,\infty}}{\int_0^{\eta} re^{-rt}e^{-(C\pm{{\eps}})/t^d}\,\text{d}t}
=0.
\end{align*}
Therefore,
\begin{align*}
&\limsup_{r\to\infty}r^{-d/(d+1)}\ln p
=\limsup_{r\to\infty}r^{-d/(d+1)}\ln(I_{0,\eta}+I_{\eta,\infty})\\
&\quad\le\limsup_{r\to\infty}r^{-d/(d+1)}\ln(e^{-r\eta}F_{\tau}(\eta)+\int_0^{\eta} re^{-rt}e^{-(C-{{\eps}})/t^d}\,\text{d}t
+I_{\eta,\infty})
=-\gamma_{-\eps}.
\end{align*}
The analogous calculation on using the lower bound in \eqref{Ftau_thmlog} finally yields
\begin{align*}
-\gamma_{+\eps}
\le\liminf_{r\to\infty}r^{-d/(d+1)}\ln p
\le\limsup_{r\to\infty}r^{-d/(d+1)}\ln p
\le-\gamma_{-\eps}.
\end{align*}
Since $\eps>0$ is arbitrary, we have proven \eqref{thmlogpdp}, and \eqref{thmlogpdt} follows from \eqref{main2}.
\end{proof}

\begin{proof}[Proof of Theorem~\ref{thmlinearnormal}]
The proof of Theorem~\ref{thmlinearnormal} follows from the proof of Theorem~\ref{thm4} below with $d=1$.
\end{proof}

\begin{proof}[Proof of Theorem~\ref{thm4}]
Let ${{\eps}}\in(0,1)$. By the assumption in \eqref{thm4F}, there exists $\eta>0$ such that
\begin{align}
    (1-{{\eps}})At^be^{-C/t^d} \leq F_{\tau}(t) \leq (1+{{\eps}})At^be^{-C/t^d},\quad \text{for all } t\in(0,\eta).
\end{align}
For any $u,v\in\R\cup\{\infty\}$, define $I_{u,v} \coloneqq \int_u^v e^{-rt}\,\text{d}F_{\tau}(t)$ so that $p$ in \eqref{eq:p} is $p=I_{0,\infty}=I_{0,\eta}+I_{\eta,\infty}$.  Using integration by parts and the assumption in \eqref{Ftau0}, we bound $I_{0,\eta}$ from above,
\begin{align} \label{I0del}
    I_{0,\eta} &= e^{-r\eta}F_{\tau}(\eta) + \int_0^{\eta} re^{-rt}F_{\tau}(t)\,\text{d}t\nonumber \\
    &\leq e^{-r\eta}F_{\tau}(\eta) + (1+{{\eps}})Ar \int_0^{\eta} e^{-rt}t^be^{-C/t^d}\,\text{d}t.
\end{align}
By Proposition \ref{proplap} and the fact that $I_{\eta,\infty}$ and the boundary term in \eqref{I0del} are $O(re^{-\eta r})$ as $r\to\infty$, we conclude
\begin{align}
    \limsup_{r\to\infty} \frac{I_{0,\infty}}{{\mu} r^{\beta}\exp( -\gamma r^{d/(d+1)})} \leq 1+{{\eps}}.
\end{align}
The analogous argument for the lower bound yields
\begin{align}
    1-{{\eps}} \leq \liminf_{r\to\infty} \frac{I_{0,\infty}}{{\mu} r^{\beta}\exp( -\gamma r^{d/(d+1)})}.
\end{align}
Since ${{\eps}}\in(0,1)$ is arbitrary, the proof is complete.
\end{proof}

\subsection{Calculations from examples}

We now give the details of the calculations for the examples in section~\ref{examples}.

\subsubsection{Diffusive search}

In section~\ref{diff}, we considered a searcher that diffuses with diffusivity $D>0$ in $d\ge1$ spatial dimensions and starts at (and is reset to) a position that is distance $L>0$ from the target. In particular, we considered the following three scenarios: (i) $d=1$ and the target is a single point, (ii) $d=3$ and the target is a sphere of radius $a>0$, and (iii) $d=3$ and the target is the exterior of a sphere centered at the starting and resetting position (i.e.\ the FPT is the first time the searcher escapes a sphere of radius $L>0$). The FPT distributions without resetting for these three scenarios are given respectively by \cite{carslaw1959}
\begin{align*}
F_{\tau}(t)
&= \text{erfc}\bigg(\sqrt{\frac{L^2}{4Dt}}\bigg),\\
F_{\tau}(t)
&=\Big(\frac{a}{a+L}\Big)\text{erfc}\bigg(\sqrt{\frac{L^2}{4Dt}}\bigg),\\
F_{\tau}(t)
&= \sqrt{\frac{4L^2}{\pi Dt}}\sum_{j=0}^{\infty} \exp\Bigg( \frac{-(j+1/2)^2L^2}{Dt} \Bigg),
\end{align*}
where $\text{erfc}(z)=(2/\sqrt{\pi})\int_{z}^{\infty}e^{-u^{2}}\,\dd u$ denotes the complementary error function. Expanding these expressions as $t\to0^{+}$ (and using that $\text{erfc}(z)\sim e^{-z^{2}}/\sqrt{\pi z^{2}}$ as $z\to\infty$) and applying Theorem~\ref{thmlinearnormal} then yields the formulas for $p_{0}$ in \eqref{p0sdiff}.

To obtain the exact distribution and moments used, we find the Laplace transform $\widetilde{S}(s)$ of the survival probability,
\begin{align*}
S(t):=\P(\tau>t)=1-F_{\tau}(t),
\end{align*}
for these three scenarios. Using that the Laplace transform of $\textup{erfc}(\sqrt{c/t})$ is $\int_{0}^{\infty}e^{-st}\textup{erfc}(\sqrt{c/t})\,\dd t=e^{-2\sqrt{cs}}/s$ for $c>0$, we obtain that the Laplace transforms of the survival probability for the first two scenarios are
\begin{align}\label{tilde12}
\widetilde{S}(s)
=\frac{e^{sL^{2}/D}}{s},\quad
\widetilde{S}(s)
=\Big(\frac{a}{a+L}\Big)\frac{e^{sL^{2}/D}}{s}.
\end{align}
For the third scenario of escape from a sphere, we first recall that the survival probability conditioned on the starting radius $X(0)=x\in[0,L]$ satisfies the backward Fokker-Planck equation,
\begin{align}\label{bfp}
\partial_{t}S
=(D\partial_{xx}S+(2/x)\partial_{x}S),\quad x\in(0,L),
\end{align}
with boundary condition $S=0$ at $x=L$ and unit initial condition. Laplace transforming \eqref{bfp} yields a linear ordinary differential equation which can be solved to obtain the Laplace transform of the survival probability conditioned on starting at the center of the sphere,
\begin{align}\label{tilde3}
\widetilde{S}(s)
=\frac{1}{s}\Big(1-(\sqrt{sL^{2}/D})\text{csch}(\sqrt{sL^{2}/D})\Big),
\end{align}
where $\text{csch}(z)=2/(e^{z}-e^{-z})$.

Having obtained the Laplace transform of the survival probability of the FPT with no resetting, the distribution of the FPT with resetting at rate $r>0$ (i.e.\ $T$ in \eqref{eq:T}) can be computed by numerical Laplace inversion with the general relation \cite{evans2020},
\begin{align}\label{gen}
\widetilde{S}_{(r)}(s)
:=\int_{0}^{\infty}e^{-st}\P(T>t)\,\dd t
=\frac{\widetilde{S}(r+s)}{1-r\widetilde{S}(r+s)},\quad s\ge0.
\end{align}
Further, the moments of $T$ can be obtained using \eqref{gen} and the general relation 
\begin{align*}
\E[T^{m}]
=m(-1)^{m-1}\frac{\dd^{m-1}}{\dd s^{m-1}}\widetilde{S}_{(r)}(s)\Big|_{s=0}.
\end{align*}

\subsubsection{Diffusive search with uniform initial conditions}

For the example considered in section~\ref{unif} of diffusive search in the interval $[0,L]$ with uniform initial and resetting conditions, recall that the survival probability conditioned on starting at $x\in[0,L]$ satisfies the backward Fokker-Planck equation,
\begin{align}\label{bfp2}
\partial_{t}S
=D\partial_{xx}S,\quad x\in(0,L),
\end{align}
with absorbing boundary conditions $S=0$ at $x\in\{0,L\}$ and unit initial condition. Laplace transforming \eqref{bfp2}, dividing by $L$ and integrating from $x=0$ to $x=L$ yields the Laplace transform of the survival probability conditioned on a uniformly distributed initial position,
\begin{align}\label{Stildeunif}
\widetilde{S}(s)
&=\frac{1}{s}\Big(1-\frac{ \tanh (\sqrt{sL^{2}/(4D)})}{\sqrt{sL^{2}/(4D)}}\Big).
\end{align}
Taking $s\to\infty$ and using Tauberian theorems (see, for example, Theorem~3 in section 5 of chapter 8 of \cite{feller-vol-2}) yields the short-time behavior of $F_{\tau}(t)$ in \eqref{shortunif}, which then yields the asymptotic probability $p_{0}$ in \eqref{p0unifexample} via Proposition~\ref{pfeller}. Further, the FPT distribution under stochastic resetting is then obtained via numerical Laplace inversion of \eqref{Stildeunif} using \eqref{gen}.

For the example of diffusive exit from a sphere with uniform initial and resetting conditions, we Laplace transform \eqref{bfp}, solve the resulting linear ordinary differential equation analytically, multiply the solution by $(3/L^{3})x^{2}$, and integrate from $x=0$ to $x=L$ to obtain the Laplace transform of the survival probability conditioned on a uniformly distributed initial position,
\begin{align*}
\widetilde{S}(s)
=\frac{-3 \sqrt{sL^{2}D} \coth (\sqrt{sL^{2}/D})+3 D+sL^2}{(sL)^{2}}.
\end{align*}
As above, taking $s\to\infty$ and using Tauberian theorems yields the short-time behavior of $F_{\tau}(t)$,
\begin{align*}
F_{\tau}(t)
\sim\sqrt{\frac{36Dt}{\pi L^{2}}}\quad\text{as }t\to0^{+},
\end{align*}
which then yields the following asymptotic probability via Proposition~\ref{pfeller},
\begin{align*}
p\sim
p_{0}=3\sqrt{D/(rL^{2})}\quad\text{as }r\to\infty.
\end{align*}
Further, the FPT distribution under stochastic resetting is then obtained via numerical Laplace inversion of \eqref{Stildeunif} using \eqref{gen}.

\subsubsection{Search on a discrete network}

For the example considered in section~\ref{ctmc} of search on a discrete network, recall that the dynamics of the continuous-time Markov chain $X$ are described by the infinitesimal generator matrix $Q\in\R^{|I|\times|I|}$. (The entry $Q(i,j)$ in the $i$th column and $j$th row of $Q$ is the jump rate from state $i$ to state $j$ if $i\neq j$ and the diagonal entries $Q(i,i)$ are chosen so that $Q$ has zero row sums \cite{norris1998}.)

To compute the survival probability of the FPT without resetting, $S(t):=\P(\tau>t)$, let the target be a single node, $I_{\t}=i_{\t}\in I$, and let $\widehat{Q}$ denote the matrix obtained by deleting the row and column in $Q$ corresponding to $i_{\t}$. Similarly, for an initial distribution $\rho$, let $\widehat{\rho}$ denote the vector obtained by deleting the entry in $\rho$ corresponding to $i_{\t}$. Then, $S(t)$ is given by the sum of the entries in the vector $e^{W_{0}t}\widehat{\rho}$, where $W_{0}$ is the transpose of $\widehat{Q}$ and $e^{W_{0}t}$ is the matrix exponential \cite{lawley2019imp}. In particular,
\begin{align}\label{dot}
S(t)
 =\mathbf{1}\cdot e^{W_{0}t}\widehat{\rho},
\end{align}
where $\cdot$ denotes the dot product and $\mathbf{1}\in\R^{|I|-1}$ is the vector of all ones. Taking the Laplace transform of \eqref{dot} then yields
\begin{align}\label{Stildectmc}
\widetilde{S}(s)
=\mathbf{1}\cdot(s\,\text{id}-W_{0})^{-1}\widehat{\rho},
\end{align}
where $\text{id}$ denotes the identity matrix. The FPT distribution under stochastic resetting is then obtained via numerical Laplace inversion of \eqref{Stildectmc} using \eqref{gen}.

The particular continuous-time Markov chains (i.e.\ choices of $Q$) used in Figure~\ref{figunif}b are created following a method used in \cite{lawley2020networks}. Specifically, we first construct a graph by randomly connecting $|I|\gg1$ vertices by approximately $5|I|$ edges. We then assign jump rates to each directed edge independently according to a uniform distribution. That is, $Q(i,i)\le0$ are chosen so that $Q$ has zero row sums and the off-diagonal entries are
\begin{align*}
Q(i,j)
=\begin{cases}
U_{i,j} & \text{if there is a directed edge from $i$ to $j$},\\
0 & \text{otherwise},
\end{cases}
\end{align*}
where $\{U_{i,j}\}_{i,j\in I}$ are independent uniform random variables on $[0,1]$.

\subsubsection{Run and tumble}

For the example considered in section~\ref{rtp} of a run and tumble search, the asymptotic probability $p_{0}$ in \eqref{p0rtp} is computed by using \eqref{eq:p}, integrating by parts, and changing variables to obtain
\begin{align*}
p
=\int_{0}^{\infty}e^{-rt}\,\dd F_{\tau}(t)
=re^{-rt_{0}}\int_{0}^{\infty}e^{-rt}F_{\tau}(t_{0}+t)\,\dd t,
\end{align*}
where $t_{0}=L/V>0$ is the smallest possible value of $\tau$ (since the searcher starts distance $L>0$ from the target and moves at a finite speed $V$). Next, observe that
\begin{align*}
F_{\tau}(t_{0}+t)
=\P(\tau\le t_{0}+t)
&=\P(\tau=t_{0})+\P(t_{0}<\tau<t_{0}+t)+\P(\tau=t_{0}+t)\\
&=\P(\tau=t_{0})+\P(t_{0}<\tau<t_{0}+t),
\end{align*}
since $\P(\tau<t_{0})=\P(\tau=t_{0}+t)=0$ for $t>0$ and $\P(\tau=t_{0})=\frac{1}{2}e^{-\lambda t_{0}}$ is the probability that the searcher starts in the $-V<0$ velocity state and does not switch states before time $t_{0}$. Hence,
\begin{align}\label{gr}
p
=\tfrac{1}{2}e^{-(r+\lambda)t_{0}}
+re^{-rt_{0}}\int_{0}^{\infty}e^{-rt}\P(t_{0}<\tau<t_{0}+t)\,\dd t.
\end{align}
Next, it was shown in section~4.1 of \cite{lawley2021pdmp} that 
\begin{align*}
\P(t_{0}<\tau<t_{0}+t)
=\beta t+O(t^{2})\quad\text{as }t\to0^{+},
\end{align*}
where $\beta=\lambda  e^{-\frac{\lambda  L}{V}} (\lambda  L+V)/(4V)$. Therefore applying Proposition~\ref{pfeller} to the integral in \eqref{gr} yields \eqref{p0rtp}.

To compute the exact FPT distribution, recall that the survival probability $S_{\pm}(t;x)$ conditioned on starting at $x\ge0$ in the state moving to the right or left (denoted by $+$ or $-$) satisfies the backward Fokker-Planck equation,
\begin{align}
\begin{split}\label{bfp3}
\partial_{t}S_{+}
&=V\partial_{x}S_{+}+\lambda(S_{-}-S_{+}),\\
\partial_{t}S_{-}
&=-V\partial_{x}S_{-}+\lambda(S_{+}-S_{-}),
\end{split}
\end{align}
with boundary condition $S_{-}=0$ at $x=0$, far-field condition $S_{+}=1$ as $x\to\infty$, and unit initial conditions. Laplace transforming \eqref{bfp3} yields a pair of linear ordinary differential equations which can be solved to obtain the Laplace transform of the survival probability conditioned on starting at $x=L$ with probability $1/2$ of being in either the $+$ or $-$ state,
\begin{align}\label{Stildertp}
\widetilde{S}(s)
=\frac{1}{2 \lambda  s}e^{-\frac{L \sqrt{s (2 \lambda +s)}}{V}} \Big(2 \lambda  \big(e^{\frac{L \sqrt{s (2 \lambda +s)}}{V}}-1\big)+\sqrt{s (2 \lambda +s)}-s\Big).
\end{align}
The FPT distribution under stochastic resetting is then obtained via numerical Laplace inversion of \eqref{Stildertp} using \eqref{gen}.

\subsubsection{Subdiffusive search}

For the examples considered in section~\ref{sub} of subdiffusive search, the exact FPT distribution with resetting is obtained via numerical Laplace inversion of \eqref{tilde12} and \eqref{tilde3} using \eqref{gen} and \eqref{relaalpha}.

\subsubsection{Non-exponential resetting}

For the examples considered in section~\ref{nonexp}, we have $F_{\tau}(t)= \text{erfc}\Big(\sqrt{\frac{L^2}{4Dt}}\Big)$. For sharp reset (see \eqref{nonexp1}), the probability of a successful search is $p=F_{\tau}(1/r)$. For uniform reset (see \eqref{nonexp2}), the probability of a successful search is
\begin{align*}
p
=\int_0^{\infty} S_{\sigma}(t) \,\text{d}F_{\tau}(t)
&=\int_0^{\infty} F_{\tau}(t) \,\text{d}F_{\sigma}(t)\\
&=\int_0^{2/r} \text{erfc}\bigg(\sqrt{\frac{L^2}{4Dt}}\bigg) \frac{r}{2}\,\dd t\\
&=\big(1+rL^2 /(4D)\big) \text{erfc}(\sqrt{rL^{2}/(8D)})-\sqrt{\frac{rL^{2}}{2\pi D}}  e^{-rL^2/(8 D)},
\end{align*}
where the second equality follows from integration by parts. 
Similarly, for gamma reset (see \eqref{nonexp3}), the probability of a successful search is
\begin{align*}
p
=\int_0^{\infty} F_{\tau}(t) \,\text{d}F_{\sigma}(t)
&=\int_0^{\infty} \text{erfc}\bigg(\sqrt{\frac{L^2}{4Dt}}\bigg) 4 r^2 t e^{-2 r t}\,\dd t\\
&=e^{-\sqrt{2rL^{2}/D}} (1+\sqrt{rL^{2}/(2D)}).
\end{align*}


\section*{Acknowledgements}
This material is based upon work supported by the National Science Foundation Graduate Research Fellowship Program under Grant No.\,2139322. SDL was supported by the National Science Foundation (Grant Nos.\ CAREER DMS-1944574 and DMS-1814832).

\bibliography{library.bib}
\bibliographystyle{unsrt}

\end{document}